\documentclass[aps,prl,reprint,superscriptaddress,longbibliography,floatfix]{revtex4-2}
\usepackage{amsmath,amssymb,amsthm,braket,bm,graphicx,booktabs}
\usepackage[colorlinks=true,citecolor=blue,urlcolor=blue,linkcolor=blue]{hyperref}
\newcommand{\Tr}{\operatorname{Tr}}
\newcommand{\MP}{\mathrm{MP}}
\newcommand{\id}{\mathrm{id}}
\newcommand{\DQ}{D_{\mathrm Q}^{(0)}}
\hypersetup{pdftitle={Exponential Advantage of Quantum over Classical References in Leakage Detection}}
\newcommand{\FundingStatement}{}
\newcommand{\DataCodeAvailabilityStatement}{The data and code supporting the findings of this study are available from the corresponding author on request.}

\newcommand{\calA}{\mathcal A}
\newcommand{\calR}{\mathcal R}
\newcommand{\Qzero}{D_{\rm Q}^{(0)}}
\newtheorem{theorem}{Theorem}
\newtheorem{lemma}{Lemma}
\newtheorem{corollary}{Corollary}
\begin{document}
\title{Exponential Advantage of Quantum over Classical References in Leakage Detection}
\author{Zheng An}
\email{anzheng@quantumsc.cn}
\affiliation{Quantum Science Center of Guangdong-Hong Kong-Macao Greater Bay Area (Guangdong), Shenzhen 518045, China}
\begin{abstract}
Leakage from a known encoding subspace can be detected by projection. However, the corresponding projector is unavailable when the encoding subspace is not classically specified. Here we show how independent quantum references prepared by the same encoder enable leakage detection without a classical description of the encoding. A coherent measurement on the references and a single message leaves every state within the two-dimensional encoding subspace, including its entanglement with a remote system, exactly unchanged. We derive the exact detection law for $M$ ideal references and prove optimality among tests with zero false alarm for every encoding. For orthogonal leakage, the miss probability is asymptotic to $4/M$, independent of the ambient dimension $d$. Measuring all references first, even collectively, gives zero detection at every finite budget under the same zero-false-alarm requirement. For a fixed detection target between zero and one, the optimal measurement-first cost is $\Theta(d/\epsilon)$ at sufficiently small tolerance $\epsilon$ on normal false alarm and conditional disturbance; a dimension-independent coherent budget suffices. For one logical qubit encoded in three physical qubits, seven coherent references achieve at least $50\%$ orthogonal-leakage detection, whereas any measurement-first receiver needs at least $588$ under the same $1\%$ normal tolerances. Quantum references thus support leakage checks without classical reconstruction, with a sample advantage exponential in the number of physical qubits.
\end{abstract}
\maketitle

Can a receiver check whether a quantum message has left its encoding subspace without knowing that subspace or disturbing a valid message? A logical qubit occupies a two-dimensional encoding subspace. A known code subspace can be tested with its projector~\cite{ChenSubspace2025}. This projection preserves every in-subspace state and its remote entanglement. We call population outside the encoding \emph{subspace leakage}, whether or not the error also drives the system outside the devices' computational levels. Leakage from these levels provides one physical setting~\cite{AliferisTerhal2007,VarbanovEtAl2020,MarquesEtAl2023,BultinkEtAl2020}. Here the encoding projector is unknown and is supplied through independent quantum references.

Independent references reveal information about the encoding without supplying its classical description. Our ideal references are maximally mixed within the subspace and independent of one another and of the message, conditional on the encoding. An uncharacterized isometric encoder can prepare these references from independent maximally mixed logical inputs and encode a separately prepared message. The references are samples of the encoding, not copies of the message. The references can be measured to estimate the subspace, for example by tomography~\cite{HaahEtAl2017,ODonnellWright2016Tomography,AnEtAl2024UnifiedTomography}. Alternatively, they can be retained for a joint operation with the message (Fig.~\ref{fig:task}). We compare these approaches without prior calibration.

Quantum memory and collective measurements can reduce the sample cost of learning quantum properties~\cite{ChenCotlerHuangLi2021,BubeckChenLi2020,HuangKuengPreskill2021,AharonovCotlerQi2022,HuangEtAl2022}. Quantum samples can also program operations, as in density-matrix exponentiation~\cite{LloydMohseniRebentrost2014,KimmelEtAl2017}. Work on coherent inference has established dimension-dependent sample advantages for quantum-output tasks, including operations on independent data~\cite{CoherentInference2026}. Here the performance criteria are asymmetric: false alarm and disturbance are constrained on normal messages, while anomalous messages may be missed. We determine the exact zero-false-alarm detection law and the measurement-first sample cost for this task.

Building on symmetry-based state comparison and one-sided rank testing~\cite{BarnettCheflesJex2003,JexAnderssonChefles2003,CheflesAnderssonJex2004,PangWu2011,ODonnellWright2017}, we construct a coherent leakage check that preserves normal messages for every unknown two-dimensional encoding subspace. The binary check detects joint components that cannot occur when all inputs share a two-dimensional support. It reads only an accept/reject flag, leaving every normal message and its remote entanglement unchanged. We derive its detection probability and prove optimality among all checks with zero false alarm for every encoding. For orthogonal leakage, the miss probability is asymptotic to $4/M$ for $M$ references, independently of the ambient dimension $d$.

In contrast, no finite-budget measurement-first receiver can detect leakage while maintaining zero false alarm for every encoding. A posterior bound and a matching construction determine the reference cost once a nonzero tolerance is allowed. For any fixed positive detection target below one and sufficiently small $\epsilon$ bounding false alarm and conditional disturbance, the optimal measurement-first reference cost is $\Theta(d/\epsilon)$, whereas a dimension-independent coherent budget suffices. For example, for one logical qubit encoded in three physical qubits, seven coherent references achieve at least 50\% orthogonal-leakage detection with false alarm and conditional disturbance each at most 1\%; every measurement-first receiver requires at least 588. The resulting sample advantage is exponential in the number of physical qubits.

\begin{figure*}[t]
\centering\includegraphics[width=\textwidth]{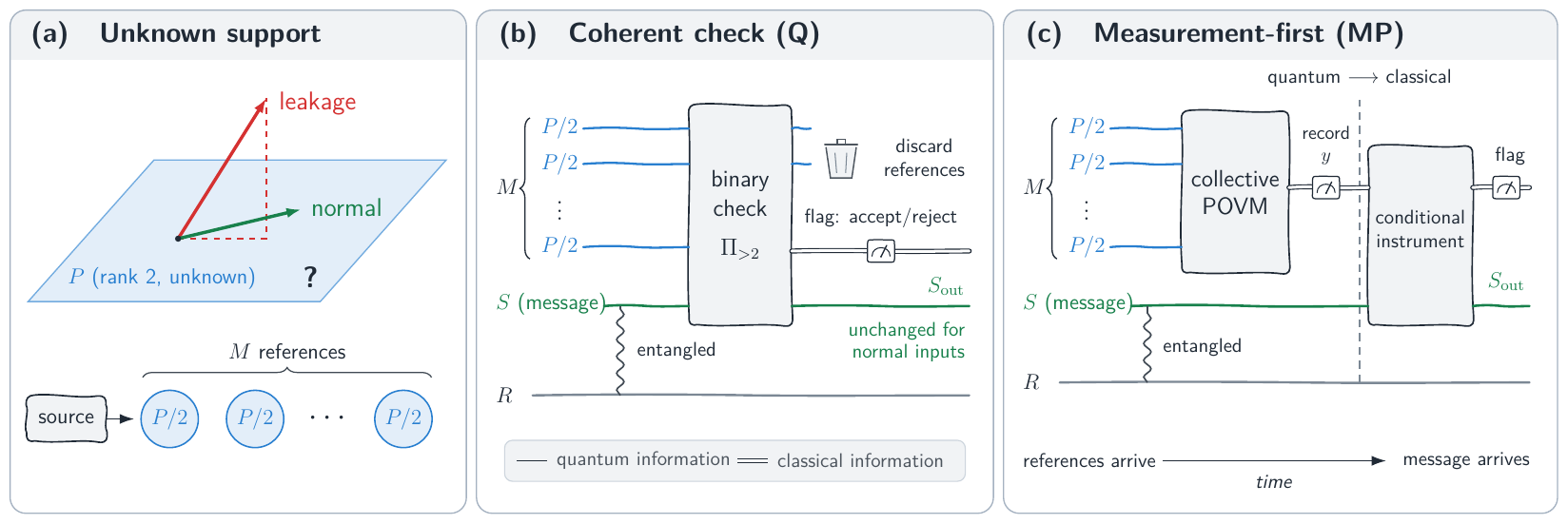}
\caption{Nondisturbing leakage detection with quantum references. (a) The unknown two-dimensional subspace onto which $P$ projects (blue plane) contains normal messages (green); leakage extends outside it (red). The source supplies $M$ references in state $P/2$, mutually independent and independent of the message conditional on $P$. (b) The quantum receiver (Q) jointly processes the references and message $S$ with a binary instrument whose rejection projector is $\Pi_{>2}$. It returns the message $S_{\mathrm{out}}$ and an accept/reject flag, exactly preserving every normal message and its entanglement with the remote system $R$; the references are discarded. (c) A measure-and-process receiver (MP) performs a collective POVM on the references and retains a classical record $y$, which controls a later instrument returning $S_{\mathrm{out}}$ and a flag. The dashed boundary marks the conversion of reference information to a classical record. In both circuits, $R$ is untouched. Single and double lines denote quantum and classical information, respectively; meter symbols denote classical readout.}\label{fig:task}
\end{figure*}

\emph{Task and performance criteria.---}Let $P$ be an unknown rank-two projector on $\mathbb C^d$, $d\ge3$, and $Q=I-P$. Conditional on $P$, the ideal references are $M$ independent copies of $P/2$, independent of the single message $S$ and its remote system $R$. A normal message is supported on the range of $P$; the specified anomalous message has reduced state $Q/(d-2)$ on $S$. Quantum receivers (Q) can jointly process the references and message. Measure-and-process receivers (MP) first measure all references with any collective POVM, then apply a quantum instrument to the message, conditioned on the classical record. Only the classical record is retained from the references. Continuous outcomes, adaptive reference-only operations, independent ancillas, and arbitrary classical computation are allowed. The receiver has no description of $P$ or correlated prior calibration, and the message is used once. The budget is the maximum number of references used on any branch of the protocol, including branches that abort or reject.

Write $\alpha_{\rm wc}(P)$ for the largest normal rejection probability and $D(P)$ for the specified anomaly's rejection probability. The accepted map $\mathcal A_P$ is the unnormalized acceptance branch with references and records discarded. Its conditional disturbance is
\begin{equation}
 \delta_g(P)=\sup_{R,\sigma_{SR}}\frac12\left\|
 \frac{(\mathcal A_P\otimes\id_R)(\sigma_{SR})}{p_a(P,\sigma)}-\sigma_{SR}\right\|_1,
 \label{eq:delta}
\end{equation}
where the supremum covers all normal states and remote extensions, with acceptance probability $p_a(P,\sigma)>0$. When the normal acceptance probability is input independent, this distance equals half the diamond-norm distance~\cite{Watrous2018} between the normalized channel and the identity channel on the encoding subspace. The uniform task requires
\begin{equation}
 \alpha_{\rm wc}(P),\delta_g(P)\le\epsilon,\qquad D(P)\ge D_0
 \quad\text{for every }P.\label{eq:task}
\end{equation}
Write $M^*_{\mathcal C}(d,\epsilon,D_0)$ for the minimum such worst-case budget in class $\mathcal C$. Logical errors within $P$ pass this support check.

\emph{A coherent check that preserves normal messages.---}Two reference registers and a normal message share the same two-dimensional support. Their joint state has no fully antisymmetric component: antisymmetrization requires three linearly independent directions in the Hilbert space of each register. A message outside $P$ can supply the third direction. In the resulting binary test, rejection detects this component and acceptance acts as the identity on every normal joint input.

For arbitrary $M$, put $N=M+1$. Let $\Pi_\lambda$ project onto the complete Schur--Weyl sector indexed by a Young diagram $\lambda$, including its permutation multiplicity~\cite{KeylWerner2001}, and let $\ell(\lambda)$ be its number of rows. The binary instrument is
\begin{equation}
 \Pi_{>2}=\sum_{\ell(\lambda)>2}\Pi_\lambda,\qquad
 K_r=\Pi_{>2},\quad K_a=I-\Pi_{>2}.\label{eq:binary}
\end{equation}
The operators are independent of $P$. Every normal joint input lies in the range of $P^{\otimes N}$, which occupies only sectors with at most two rows. Thus $K_a$ preserves every normal joint input and its correlations with untouched $R$, with certain acceptance.

The check reads only an accept/reject flag. Coherently computing the row predicate and uncomputing the label workspace preserves coherence between all accepted sectors. Measuring and forgetting the full Young label of the references and message jointly instead gives a different output channel despite the same acceptance probabilities. For $M=2$, both instruments accept every normal message. For a logical Bell input, the coherent check has fidelity $1$, whereas measuring the full joint Young label gives $2/3$~\cite{Supplemental}.

\emph{Optimal detection from finite references.---}For every integer $M\ge0$, the induced message rejection effect is
\begin{align}
 R_P&=\DQ(M)Q,\nonumber\\
 \DQ(M)&=1-\frac4{M+1}+\frac{M+3}{(M+1)2^M}.\label{eq:exact-law}
\end{align}
False alarm and disturbance vanish. Orthogonal-leakage detection is zero for $M=0,1$, equals $1/12$ at $M=2$, and increases strictly thereafter. The miss probability is $4/M+O(M^{-2})$. A message with leakage weight $\zeta=\Tr(Q\rho_S)$ is rejected with probability $\zeta\DQ(M)$, tending to $\zeta$. For a fixed orthogonal direction, each register is supported on the space spanned by that direction and the range of $P$, of dimension at most three. Covariance makes the response independent of the chosen direction and of the remaining ambient dimensions. Define $M_{\rm Q}^{(0)}(D_0)=\min\{M:\DQ(M)\ge D_0\}$. The minimum budgets at zero false alarm for detection $0.5$, $0.9$, and $0.99$ are $7$, $39$, and $399$, respectively.

This law is optimal among all checks with zero false alarm for every $P$. The span of all normal joint supports comprises the complete sectors with at most two rows. Positivity and zero false alarm for every $P$ force any rejection effect to satisfy $0\le E_r\le\Pi_{>2}$. The instrument in Eq.~\eqref{eq:binary} attains this maximal effect while preserving normal output. The Supplemental Material (SM) gives the sector sum, multiplicities, and padding argument for protocols using at most $M$ references on every branch~\cite{Supplemental}.

\begin{figure*}[t]
\centering\includegraphics[width=\textwidth]{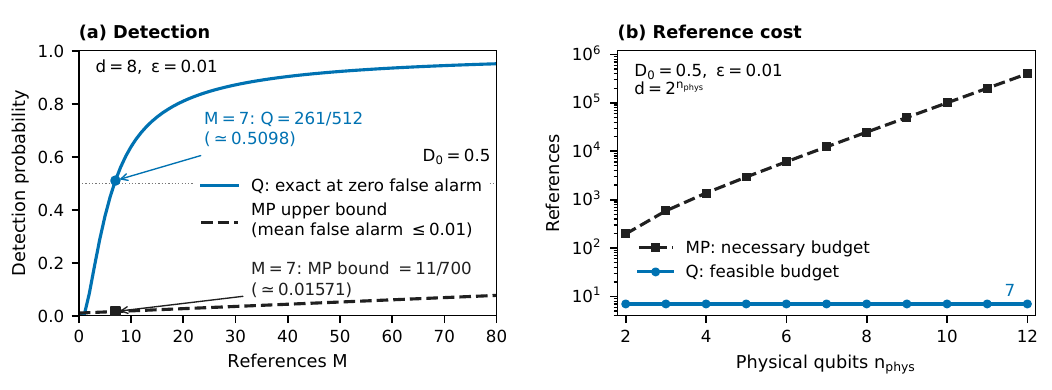}
\caption{Reference costs for an ideal source. (a) Optimal Q detection at zero false alarm from Eq.~\eqref{eq:exact-law}, compared with the Haar-averaged MP upper bound at $d=8$, $\epsilon=0.01$. The latter constrains the guaranteed detection of any MP receiver satisfying the uniform task. At $M=7$, Q detects with probability $261/512$, whereas the MP bound is $11/700$. The Q instrument also satisfies this positive tolerance. (b) At $D_0=0.5$, $\epsilon=0.01$, seven references suffice for Q, while MP requires at least $98(d-2)$ for $d=2^{n_{\rm phys}}$. All curves are evaluated from closed-form expressions.}\label{fig:resources}
\end{figure*}

\emph{The cost of measuring the references first.---}No finite-budget MP receiver can detect the specified orthogonal leakage at zero false alarm for every encoding: $D(P)=0$ for every $P$. The classical record is unrestricted, the references may be measured collectively, and the later message remains quantum. We compare reference ensembles weighted by a message direction's likelihood under the normal and anomalous hypotheses. The resulting posterior operator inequality bounds Haar-averaged detection in terms of Haar-averaged false alarm.

The largest generalized eigenvalue of these posterior operators is
\begin{equation}
 \kappa_M=1+\frac{M}{2(d-2)}
 -\frac{3\,\mathbf1_{M\ {\rm odd}}}{2(d-2)(M+2d-2)}.
 \label{eq:kappa}
\end{equation}
For any MP instrument it gives
\begin{equation}
 \bar D\le\kappa_M\bar\alpha,\qquad
 M^*_{\MP}\ge\left\lceil2(d-2)\left(\frac{D_0}{\epsilon}-1\right)\right\rceil.
 \label{eq:lower}
\end{equation}
Bars denote Haar averages over $P$. Here $\alpha(P)$ is the rejection probability for the normal input $P/2$, so $\alpha(P)\le\alpha_{\rm wc}(P)$. Uniform feasibility requires $\bar\alpha\le\epsilon$, $\bar D\ge D_0$, and hence $\epsilon\kappa_M\ge D_0$. At zero false alarm, averaged detection vanishes and continuity gives pointwise impossibility. The End Matter derives the bound for arbitrary MP receivers without constraining accepted output.

A matching estimate-and-project construction also controls that output. Measure each reference with the covariant rank-one POVM $d|x\rangle\langle x|\,d\nu(x)$, where $\nu$ is Haar probability measure, and let $\widehat P$ project onto the subspace corresponding to the two largest eigenvalues of the empirical covariance $S_M=(d/M)\sum_i|x_i\rangle\langle x_i|$. Its support risk obeys $s=1-\mathbb E_{\mathcal D|P}\Tr(P\widehat P)/2\le C_1d/M$ for $M\ge d$, with finite universal $C_1\ge1$~\cite{GutaEtAl2020,Supplemental}; $\mathbb E_{\mathcal D|P}$ averages over the reference measurement record $\mathcal D$ conditional on $P$. Apply the projective instrument $\{\widehat P,I-\widehat P\}$, accept and return the $\widehat P$ branch, and discard the record. The averaged, unnormalized accepted map is $\mathcal A_P(X)=\mathbb E_{\mathcal D|P}[\widehat P X\widehat P]$. Conditioning on acceptance after this averaging gives
\begin{equation}
 \alpha_{\rm wc}=s,\qquad\delta_g\le s,\qquad
 D=1-\frac{2s}{d-2}.\label{eq:upper}
\end{equation}
These bounds hold for every normal state and its remote extensions. The SM proves the concentration and conditional-channel bounds.

For each fixed $D_0\in(0,1)$ and
\begin{equation}
 0<\epsilon\le\min\{1/32,D_0/2,(1-D_0)/2\},\label{eq:range}
\end{equation}
the uniform task with ideal references obeys
\begin{equation}
 M^*_{\rm Q}\le M_{\rm Q}^{(0)}(D_0),\qquad
 M^*_{\MP}=\Theta_{D_0}(d/\epsilon).\label{eq:resource}
\end{equation}
The constants in $\Theta_{D_0}$ depend only on the fixed target $D_0$. The constructive upper budget is $\lceil C_1d/\epsilon\rceil$. Fig.~\ref{fig:resources} compares the coherent detection law with the MP necessary bound. At $D_0=0.5$, $\epsilon=0.01$, seven coherent references suffice, while MP requires at least $98(d-2)$, or $588$ at $d=8$. With $d=2^{n_{\rm phys}}$, this sample gap is exponential in the number of physical qubits. The matching bounds establish that the linear dependence on $d$ is intrinsic to measurement-first reference access for this detection-and-output task.

\begin{figure*}[t]
\centering\includegraphics[width=\textwidth]{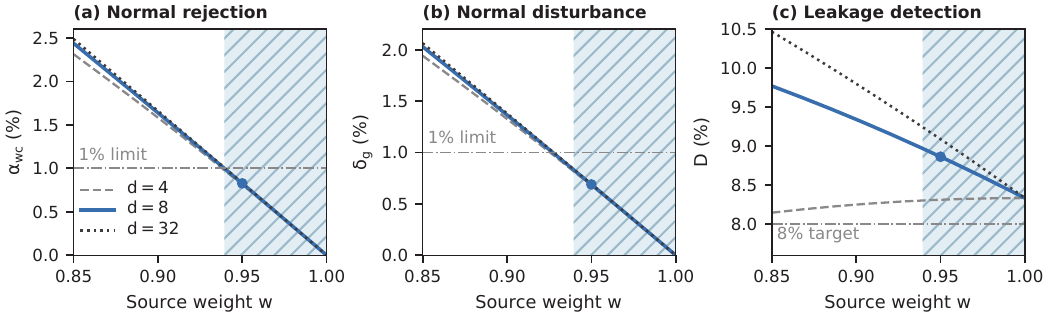}
\caption{Two-reference performance with source contamination. For $d=4,8,32$, normal rejection probability, conditional disturbance of normal messages, and orthogonal-leakage detection are compared with their task thresholds. Hatching marks the feasible interval at $d=8$; dots indicate $w=0.95$. The source weight $w=0.95$ meets $\alpha_{\rm wc},\delta_g\le0.01$ and $D\ge0.08$ for all $d\ge4$; the marked $d=8$ instance has $M^*_{\rm Q}\le2$ and $M^*_{\MP}\ge84$. The curves are evaluated from the channel formulas for contaminated references and ideal receiver operations.}\label{fig:source}
\end{figure*}

\emph{Imperfect references and implementation.---}Consider two independent references in the state $\rho_P(w)=wP/2+(1-w)Q/(d-2)$. For every $d\ge4$, the same binary antisymmetrizer at $w=0.95$ rejects normal messages with probability below $0.84\%$, induces conditional disturbance below $0.70\%$, and detects orthogonal leakage with probability at least $8.31\%$. For this contaminated source, we use the target $D\ge0.08$ and require $\alpha_{\rm wc},\delta_g\le0.01$, obtaining
\begin{equation}
 M^*_{\rm Q}\le2,\qquad M^*_{\MP}\ge14(d-2).\label{eq:noisy}
\end{equation}
The MP lower bound is inherited from the ideal case: because contamination is a $P$-independent channel, any measurement-first protocol on contaminated references can be simulated by applying that channel to ideal references and then running the protocol. At $d=8$, two Q references suffice while MP needs at least 84. Fig.~\ref{fig:source} shows that all three criteria are met simultaneously. Contamination can increase anomalous rejection while also increasing normal rejection and disturbance, so detection alone does not measure the quality of the check.

These source-contamination results assume ideal receiver operations. Coherent Schur-transform circuits~\cite{BurchardtEtAl2025,BaconChuangHarrow2006} approximate the binary instrument with $\operatorname{poly}(M+1,\log d,\log(1/\eta_{\rm S}))$ gates at transform precision $\eta_{\rm S}$. The SM gives the dimension regimes and error bounds for the complete instrument. For power-of-two $d$, the raw references occupy $M\log_2 d$ qubits, excluding the message and workspace. This storage scales as $O(\log d)$ at a fixed ideal-source detection target.

\emph{Discussion.---}Quantum references enable a support check without a classical reconstruction of the encoding. For a single message with no prior calibration, we determine the optimal zero-false-alarm detection law for ideal references and the measurement-first reference cost at a fixed detection target and small normal tolerance. The coherent instrument preserves every normal message and its remote correlations, while any finite-budget measurement-first receiver has zero detection probability at zero false alarm for every encoding.

Exact compression of identically prepared mixed states permits removal of the maximally mixed Schur multiplicity registers~\cite{YangChiribellaEbler2016}. A natural next question is how much quantum memory must be retained after reference preprocessing to support the later check. Task-specific compression may retain less information than faithful reconstruction; the tradeoff involves reference number, retained memory, ambient dimension, and normal-output tolerance. Repeated use raises a complementary question: whether the advantage survives as anomalous messages disturb the references and create correlations, and as classical calibration costs are amortized over multiple messages. 

\begin{acknowledgments}
\FundingStatement
\end{acknowledgments}
\section*{Data Availability}
\DataCodeAvailabilityStatement
\nocite{Krovi2019,NielsenChuang1997,BisioDAranoPerinottiSedlak2011,MoChiribella2019,FanizzaMariGiovannetti2019,Sentis2012,DusekBuzek2002,BergouHillery2005,Sentis2010,SpectralAnomaly2026,Vershynin2018,Schumacher1996}
\bibliography{references}
\clearpage
\section*{End Matter}
\emph{Posterior bound for collective measurements.---}For a unit message direction $v$, define the reference operators weighted by the normal ($g$) and anomalous ($b$) likelihoods:
\begin{align}
 \Omega_g(v)&=\mathbb E_P(P/2)^{\otimes M}\braket{v|P|v}/2,\nonumber\\
 \Omega_b(v)&=\mathbb E_P(P/2)^{\otimes M}\braket{v|Q|v}/(d-2).
 \label{eq:posteriors}
\end{align}
Here $\mathbb E_P$ averages over rank-two subspaces with normalized Haar measure and $Q=I-P$. Both operators have trace $1/d$. With $T_M=\mathbb E_P(P/2)^{\otimes M}$, positivity of both weights Haar-almost everywhere gives the common kernel $\ker T_M$.

The eigenvalue of $T_M$ in an active reference sector $\lambda=(a,b,0,\ldots)$, $a+b=M$, is
\begin{equation}
 t_{a,b}=\frac{a-b+1}{2^M\dim U_{a,b}^{(d)}}.
\end{equation}
The stabilizer of $v$ acts as $U(d-1)$ on $v^\perp$. Its branches $(u,v',0,\ldots)$ obey $b\le u\le a$, $0\le v'\le b$. Each retains Specht multiplicity $f_{a,b}$, the dimension of its irreducible permutation representation. Gelfand--Tsetlin branching~\cite{GrinkoBurchardtOzols2023} gives the normal posterior eigenvalue $B$:
\begin{align}
 x:=\frac{B}{t_{a,b}}
 &=\frac{(a+1-u)(a+2-v')}{2(z+1)(a+d)}\nonumber\\
 &\quad+\frac{(u-b)(b+1-v')}{2(z+1)(b+d-1)},
 \qquad z=a-b.\label{eq:em-branch}
\end{align}
The reference moment vanishes in sectors with a third row; the anomalous eigenvalue is $(t_{a,b}-2B)/(d-2)$, with ratio $(1/x-2)/(d-2)$ to $B$. For fixed $a,b,u$, $x$ decreases with $v'$. At $v'=b$,
\begin{equation}
 x(u+1,b)-x(u,b)=-\frac{b+d-2}{2(a+d)(b+d-1)}<0.
\end{equation}
The ratio is maximal at $(u,v')=(a,b)$. Substituting $a=(M+z)/2$, $b=(M-z)/2$ gives
\begin{equation}
 \kappa_{M,z}=1+\frac{M}{2(d-2)}
 -\frac{z(z+2)}{2(d-2)(M+2d-2)}.
\end{equation}
Its maximum at $z=M\bmod2$ proves Eq.~\eqref{eq:kappa}, including $M=0$, and $\Omega_b(v)\preceq\kappa_M\Omega_g(v)$.

For any collective reference POVM $E(dy)$, the message rejection effect conditioned on $y$ has spectral decomposition $R_y=\sum_j r_{yj}|v_{yj}\rangle\langle v_{yj}|$. Then
\begin{equation}
 \bar\alpha=\int\sum_j r_{yj}\Tr[E(dy)\Omega_g(v_{yj})],
\end{equation}
where $\alpha(P)$ is the rejection probability for $P/2$; replacing $\Omega_g$ by $\Omega_b$ gives $\bar D$. Positivity yields $\bar D\le\kappa_M\bar\alpha$. Uniform feasibility gives $\alpha(P)\le\alpha_{\rm wc}(P)\le\epsilon$ and $\bar D\ge D_0$.

Zero false alarm for every $P$ forces $\bar D=0$. Pad the receiver to $M$ references by supplying and ignoring any unused copies. Its fixed rejection effect $E$ gives
\begin{equation}
 D(P)=\Tr\!\left[E\left((P/2)^{\otimes M}\otimes\frac{I-P}{d-2}\right)\right].
\end{equation}
For a fixed receiver, $D(P)$ is continuous and nonnegative. Full support of Haar measure therefore implies $D(P)=0$ for every $P$.

\onecolumngrid
\clearpage
\setcounter{section}{0}
\setcounter{subsection}{0}
\setcounter{equation}{0}
\setcounter{figure}{0}
\setcounter{table}{0}
\setcounter{secnumdepth}{2}
\renewcommand{\theequation}{S\arabic{equation}}
\renewcommand{\thefigure}{S\arabic{figure}}
\renewcommand{\thetable}{S\arabic{table}}
\renewcommand{\theHequation}{supp.\arabic{equation}}
\renewcommand{\theHfigure}{supp.\arabic{figure}}
\renewcommand{\theHtable}{supp.\arabic{table}}
\renewcommand{\theHsection}{supp.\arabic{section}}
\renewcommand{\theHsubsection}{supp.\arabic{section}.\arabic{subsection}}
\renewcommand{\id}{\operatorname{id}}
\pdfbookmark[0]{Supplemental Material}{supplement-start}
\begin{center}
{\large\bfseries Supplemental Material}\\[0.5em]
{\bfseries Exponential Advantage of Quantum over Classical References in Leakage Detection}
\end{center}
\section{Task, access order, and conditional normalization}\label{supp:sm:task-access}
Let $P$ be an unknown rank-two projector on $\mathbb C^d$, $d\ge3$, and
write $Q=I-P$. The receiver has $M$ independent ideal reference registers in
$(P/2)^{\otimes M}$ and one independent message register $S$. A normal
message--remote-system state $\omega_{SR}$ obeys
$(P\otimes I_R)\omega_{SR}(P\otimes I_R)=\omega_{SR}$; $R$ is untouched.
The specified anomalous message is $Q/(d-2)$. The references are independent
of the message and $R$, conditional on $P$. There is no earlier record or
auxiliary state correlated with $P$.

For a complete receiver instrument, let $\calA_P$ and $\calR_P$ denote the
unnormalized message maps after reference registers and internal records are
discarded. Define
\begin{align}
 \alpha(P)&=\Tr\calR_P(P/2),\qquad
 D(P)=\Tr\calR_P\bigl(Q/(d-2)\bigr),\\
 \alpha_{\rm wc}(P)&=\sup_{\sigma=P\sigma P}\Tr\calR_P(\sigma),\\
 \delta_g(P)&=\sup_{R,\omega_{SR}}\frac12
 \left\|\frac{(\calA_P\otimes\id_R)(\omega_{SR})}
 {\Tr(\calA_P\otimes\id_R)(\omega_{SR})}-\omega_{SR}\right\|_1 .
 \label{supp:eq:conditional-distance}
\end{align}
The supremum is over normal density operators, and the task requires positive
acceptance probability for them. Unless it is independent of the input,
the normalized transformation is not a linear channel; we then use the
statewise definition in Eq.~\eqref{supp:eq:conditional-distance}, not a diamond
norm of that transformation. For a state-independent acceptance probability,
the same supremum is the half-diamond distance of the normalized channel
restricted to the logical input $P$.

The quantum class permits arbitrary joint processing of the references and
message. The measure-and-process (MP) class first measures all references,
using any collective POVM (including continuous records and adaptive
reference-only operations), and then applies an arbitrary record-conditioned
quantum instrument to the message. No $P$-correlated quantum reference
information remains after that measurement, and message-to-reference feedback
is outside this access order. Both classes allow independent ancillas. The
budget is the maximum reference charge on every branch, not its expectation.
Uniform requirements apply to every $P$. Haar integration below is a proof
device; the single-instance task assumes no correlated history or prior
classical calibration record, rather than physical resampling of $P$ between
messages.

\section{Ideal coherent check at arbitrary reference budget}\label{supp:sm:ideal-arbitrary-m}
Put $N=M+1$ for the total number of supplied registers. Under Schur--Weyl duality, write
\begin{equation}
 (\mathbb C^d)^{\otimes N}
 =\bigoplus_{\substack{\lambda\vdash N\\\ell(\lambda)\le d}}
 U_\lambda^{(d)}\otimes[\lambda],\qquad
 \Pi_{>2}=\sum_{\ell(\lambda)>2}\Pi_\lambda .
 \label{supp:eq:schur}
\end{equation}
Here $[\lambda]$ is the entire symmetric-group multiplicity space, of
dimension $f_\lambda$. The row-cutoff predicate is the standard one-sided
rank-testing predicate of O'Donnell and Wright~\cite{ODonnellWright2017} (Sec.~6.1, Proposition~6.1 in the author-posted preprint linked in Sec.~\ref{supp:sm:literature}); the positivity/support method also occurs in universal
unambiguous state comparison~\cite{BarnettCheflesJex2003,JexAnderssonChefles2003}. We apply it to the present
mixed-reference, single-message family and evaluate its exact instrument
performance.

\begin{theorem}[Exact ideal-reference law]\label{supp:thm:ideal}
For every integer $M\ge0$ and every $d\ge3$, the complete binary instrument
\begin{equation}
 K_r=\Pi_{>2},\qquad K_a=I-\Pi_{>2}
 \label{supp:eq:instrument}
\end{equation}
has $\alpha_{\rm wc}(P)=\delta_g(P)=0$ for every $P$. Its induced message
rejection effect is
\begin{equation}
 R_P=\Qzero(M)Q,\qquad
 \boxed{\Qzero(M)=1-\frac4{M+1}+\frac{M+3}{(M+1)2^M}}.
 \label{supp:eq:closed-law}
\end{equation}
Consequently, any message with leakage weight
$\zeta=\Tr(Q\rho_S)$ is rejected with probability $\zeta\Qzero(M)$.
Among all complete instruments with maximum reference budget $M$ and zero
normal false alarm for every $P$, this detection probability is optimal for
each $P$ and each anomalous input, even without imposing a normal-output
constraint on competing instruments. The optimum is attainable while
preserving every normal message and all its remote correlations exactly.
Optimality here is restricted to zero false alarm, not positive tolerance.
\end{theorem}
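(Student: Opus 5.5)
The proof has three parts: the normal-message properties, the exact detection law, and optimality.

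\emph{Normal messages.} For a normal input $\omega_{SR}$ the joint state $(P/2)^{\otimes M}\otimes\omega_{SR}$ is supported on $\operatorname{ran}(P^{\otimes N})\otimes\mathcal H_R$. Up to a unitary, $\operatorname{ran}(P^{\otimes N})$ is $(\mathbb C^2)^{\otimes N}$, and by Schur--Weyl duality it decomposes into irreducibles $U_\lambda^{(2)}\otimes[\lambda]$ with $\ell(\lambda)\le2$. Each $U_\lambda^{(2)}\subset U_\lambda^{(d)}$ under the restriction $U(2)\subset U(d)$. Hence $\Pi_{>2}$ annihilates this range, and $K_a$ acts as the identity on it. This gives $\alpha_{\rm wc}=0$ and $p_a=1$. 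Discarding the references then returns $\omega_{SR}$ exactly, so $\delta_g=0$.

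\emph{Exact law.} By $U(d)$ covariance, $R_P=\Tr_{\rm ref}[\Pi_{>2}((P/2)^{\otimes M}\otimes I)]$ commutes with every unitary fixing $P$. It also satisfies $PR_PP=0$ by the previous paragraph. Therefore $R_P=cQ$, where $c=\Tr[\Pi_{>2}((P/2)^{\otimes M}\otimes|e\rangle\langle e|)]$ for any unit $e\perp\operatorname{ran}P$. This immediately gives the $\zeta$-weighted statement. To compute $c$, restrict to $\mathbb C^3=\operatorname{span}(\operatorname{ran}P,e)$. Rejection there means $\ell(\lambda)=3$, and $\Pi_{>2}$ restricted to $(\mathbb C^3)^{\otimes N}$ is the corresponding $U(3)$ Schur projector. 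Expand $(P/2)^{\otimes M}=2^{-M}\sum_{\mu}\Pi^{(2)}_\mu$ over two-row $\mu\vdash M$. Adding one box in the $e$ direction then follows Pieri/Gelfand--Tsetlin branching for $U(2)\subset U(3)$. The vector $|e\rangle$ together with the $U(2)$ weight vectors of $\mu=(a,b)$ lands in $\lambda=(a,b,1)$ with a branching weight $w_{a,b}$. Concretely, $c=2^{-M}\sum_{a+b=M,\,a\ge b}f_{a,b}\dim U^{(2)}_{a,b}\,w_{a,b}$, with $\dim U^{(2)}_{a,b}=a-b+1$. The weight $w_{a,b}$ is the squared norm of the projection onto the third-row component. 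I expect it to be $\frac{b}{b+2}$-type, namely $\frac{(b)(a+2)}{(a+1)(b+1)\cdot\ldots}$, obtained from the Clebsch--Gordan coefficients for adding a fundamental box, i.e.\ the ratio of hook/content factors $\frac{(c_3-c_1)(c_3-c_2)}{\ldots}$. Summing against $f_{a,b}=\binom{M}{b}\frac{a-b+1}{a+1}$ is then a binomial identity, which I would verify by the generating function $\sum_b\binom Mb$ and the terms $1/(M+1)$. The targets $c(0)=c(1)=0$ and $c(2)=1/12$ serve as checks. \textbf{This computation is the main obstacle.} I would first pin down $w_{a,b}$ by computing the dimension of $(a,b,1)$ inside $U_{(a,b)}^{(3)}\otimes\mathbb C^3$, restricted to the $e$-weight-one subspace. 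Alternatively, the full Haar-averaged joint state can be decomposed as $\frac{\dim U^{(3)}_\lambda}{\ldots}$ multiplicities, yielding $c$ by dimension counting.

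\emph{Optimality.} Pad any competing protocol to exactly $M$ references, so that its rejection branch defines a fixed effect $0\le E_r\le I$ on $N$ registers. Zero false alarm for every $P$ and every normal input forces $\Tr[E_r(\rho\otimes\sigma)]=0$ for all states supported on $\operatorname{ran}P^{\otimes N}$ that are of the form $(P/2)^{\otimes M}\otimes\sigma$. Positivity then gives $E_r\,(P^{\otimes M}\otimes P)=0$, so $E_r$ vanishes on $\operatorname{ran}P^{\otimes N}$ for every $P$. The span of these ranges over all $P$ is $\operatorname{ran}(I-\Pi_{>2})$, since it is $U(d)$-invariant and $S_N$-invariant and contains a highest-weight vector of each sector with $\ell\le2$. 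It follows that $E_r\le\Pi_{>2}$, so every anomalous rejection probability is at most that of $\Pi_{>2}$, which the instrument of Eq.~\eqref{supp:eq:instrument} attains. The restriction to zero false alarm is what makes this argument work.
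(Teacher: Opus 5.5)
Your normal-message argument, the reduction $R_P=cQ$ with $c=\Tr[\Pi_{>2}((P/2)^{\otimes M}\otimes|e\rangle\langle e|)]$, the sector decomposition $c=2^{-M}\sum_b f_{a,b}(a-b+1)w_{a,b}$, and the optimality argument are correct and follow the paper. Your support argument, via $U(d)\times S_N$ invariance plus highest-weight vectors, validly replaces the paper's twirl: the Schur--Weyl decomposition is multiplicity-free, so invariant subspaces are sums of complete blocks $U^{(d)}_\lambda\otimes[\lambda]$.

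The gap is the exact law, which is the quantitative content of the theorem. You never determine $w_{a,b}$ or evaluate the sum, so $\Qzero(M)$ is not established. Your tentative expression has $a+1$ and $a+2$ interchanged. The correct conditional addition probability is $w_{a,b}=b(a+1)/[(a+2)(b+1)]$; for example it is $1/3$ at $a=b=1$, giving $c(2)=\tfrac14\cdot\tfrac13=1/12$.

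Your proposed way of pinning down $w_{a,b}$ also fails as stated. In $U^{(3)}_{(a,b)}\otimes\mathbb C^3$, the $e$-weight-one part of the $(a,b,1)$ component has the same dimension $a-b+1$ as your input space $U^{(2)}_{(a,b)}\otimes e$. However, it is a $U(2)$-irrep lying diagonally across $U^{(2)}_{(a,b)}\otimes e$ and the weight-one part of $U^{(3)}_{(a,b)}$ tensored with $\operatorname{span}(e_1,e_2)$. The overlap you need is a squared $U(3)\supset U(2)$ isoscalar coefficient, which no dimension count supplies.

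The missing idea is how the paper avoids coupling coefficients altogether: use permutation invariance of $\Pi_\nu$ to spread the single $e$ factor over all $N$ positions. This gives $N\Tr[\Pi_\nu((P/2)^{\otimes M}\otimes|e\rangle\langle e|)]=[x]\Tr\Pi_\nu A(x)^{\otimes N}=f_\nu\,[x]s_\nu(1/2,1/2,x)$. Here $A(x)=\operatorname{diag}(1/2,1/2,x,0,\ldots,0)$ and $[x]$ extracts the coefficient of $x$. A semistandard tableau with more than two rows and exactly one entry $3$ exists only for $\nu=(a,b,1)$ with $b\ge1$, the $3$ occupying the third-row box. Hence this coefficient equals $f_{(a,b,1)}(a-b+1)/2^M$, and the hook-length formula gives $w_{a,b}=f_{(a,b,1)}/[(M+1)f_{(a,b)}]$.

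For the closed form, use $1-w_{a,b}=(M+2)/[(a+2)(b+1)]$ and set $L=M+3$, $t=b+1$. Each term of $1-c$ then becomes $(L-2t)^2\binom Lt/[L(M+1)2^M]$ for $1\le t\le\lfloor(L-1)/2\rfloor$. The binomial second moment $\sum_{t=0}^L(L-2t)^2\binom Lt=L2^L$ finishes the sum. Combine it with the reflection $t\leftrightarrow L-t$, the endpoint terms $L^2$, and the vanishing middle term for even $L$ to get $1-c=4/(M+1)-(M+3)/[(M+1)2^M]$. Without these steps your proposal proves only $R_P\propto Q$, exact normal preservation, and optimality relative to $\Pi_{>2}$, not the value $\Qzero(M)$.
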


\begin{lemma}[Complete normal support]\label{supp:lem:support}
The span over rank-two $P$ satisfies
\begin{equation}
 \mathcal K_N:=\operatorname{span}_P\operatorname{ran}P^{\otimes N}
 =\bigoplus_{\ell(\lambda)\le2} U_\lambda^{(d)}\otimes[\lambda].
 \label{supp:eq:support}
\end{equation}
\end{lemma}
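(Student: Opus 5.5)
We prove the two inclusions separately. The inclusion $\mathcal K_N\subseteq\bigoplus_{\ell(\lambda)\le2}U_\lambda^{(d)}\otimes[\lambda]$ follows from Schur--Weyl duality applied to the subspace $W=\operatorname{ran}P\cong\mathbb C^2$. Decompose $W^{\otimes N}=\bigoplus_{\lambda\vdash N,\ell(\lambda)\le2}U_\lambda^{(2)}\otimes[\lambda]$. The embedding $W\hookrightarrow\mathbb C^d$ commutes with the permutation action, and it maps the $\lambda$-isotypic component of $W^{\otimes N}$ for $S_N$ into the $\lambda$-isotypic component of $(\mathbb C^d)^{\otimes N}$. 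That component is exactly $U_\lambda^{(d)}\otimes[\lambda]$. Hence every $\operatorname{ran}P^{\otimes N}$ lies in the stated direct sum, and so does the span.

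For the reverse inclusion we use covariance. $\mathcal K_N$ is invariant under $U^{\otimes N}$ for all $U\in U(d)$, because $U^{\otimes N}\operatorname{ran}P^{\otimes N}=\operatorname{ran}(UPU^\dagger)^{\otimes N}$. It is also invariant under $S_N$, since each $\operatorname{ran}P^{\otimes N}$ is. By Schur--Weyl duality, the $U(d)\times S_N$ irreducibles $U_\lambda^{(d)}\otimes[\lambda]$ appear with multiplicity one. Therefore any jointly invariant subspace is a direct sum of complete sectors. It then suffices to show that $\mathcal K_N$ meets each sector with $\ell(\lambda)\le2$ nontrivially.

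Fix such a $\lambda$ and a single $P$ with range $W$. The $\lambda$-isotypic part of $W^{\otimes N}$ is $U_\lambda^{(2)}\otimes[\lambda]$, and it is nonzero because $U_\lambda^{(2)}\neq0$ whenever $\ell(\lambda)\le2$. By the embedding argument, this nonzero subspace sits inside $U_\lambda^{(d)}\otimes[\lambda]$. So $\mathcal K_N\cap(U_\lambda^{(d)}\otimes[\lambda])\neq0$. Irreducibility of the sector under $U(d)\times S_N$ then forces $\mathcal K_N$ to contain the whole sector, including the full multiplicity space $[\lambda]$. Combined with the first inclusion, this gives equality.

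The step needing the most care is the claim that invariant subspaces are sums of complete sectors. This relies on the multiplicity-freeness of the joint $U(d)\times S_N$ decomposition: the commutant is $\bigoplus_\lambda \mathbb C\,\Pi_\lambda$, so invariant projectors are sums of the $\Pi_\lambda$. One should also check that $\mathcal K_N$ is a linear span of a set closed under both actions, which holds as noted above. If $d<N$, sectors with $\ell(\lambda)>d$ are absent, but all sectors with $\ell(\lambda)\le2$ exist because $d\ge3$, so no edge case arises.
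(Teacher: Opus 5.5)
Your proof is correct, but it reaches the reverse inclusion by a different mechanism than the paper.

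The paper does not split the statement into two inclusions. It twirls $(P_0/2)^{\otimes N}$ over $U(d)$ and computes the result blockwise: it is $\dim U_\lambda^{(2)}/(2^N\dim U_\lambda^{(d)})$ times $I_{U_\lambda^{(d)}}\otimes I_{[\lambda]}$ for $\ell(\lambda)\le2$, and zero otherwise. It then uses the fact that the support of an average of positive operators is the span of their supports, with a continuity remark to pass from almost every rotation to every rotation. Both inclusions then follow at once from $\operatorname{supp}T_N=\mathcal K_N$.

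You instead argue as follows. $\mathcal K_N$ is invariant under $U(d)\times S_N$. The joint decomposition is multiplicity-free, so the commutant is spanned by the central projectors $\Pi_\lambda$. Hence $\mathcal K_N$ is a sum of complete sectors. You then exhibit a nonzero vector of each two-row sector inside a single $W^{\otimes N}$.

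Your route is purely algebraic. It needs no Haar measure and no measure-theoretic support argument, and the ``full multiplicity'' claim becomes an immediate consequence of irreducibility.

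The paper's route costs a short character computation: the trace on the $\lambda$ block is $s_\lambda(1/2,1/2,0,\ldots)$. In exchange it gives the explicit spectrum of $T_N$, with strictly positive eigenvalues on every two-row sector. The End Matter later reuses this spectrum as $t_{a,b}$, and it is also what identifies $\ker T_M$ as the common kernel of the posterior operators. Your argument establishes the support equality but not these eigenvalues.
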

\begin{proof}
For a fixed two-plane $P_0$, the restriction to $P_0^{\otimes N}$ is
$\bigoplus_{\ell(\lambda)\le2}U_\lambda^{(2)}\otimes[\lambda]$.
In particular, every such $\lambda$ contains the full multiplicity space
$[\lambda]$. Twirl $(P_0/2)^{\otimes N}$ by $U^{\otimes N}$ with Haar
probability measure on $U(d)$. On the $\lambda$ block the result is
\begin{equation}
 T_N\big|_\lambda=
 \frac{\dim U_\lambda^{(2)}}{2^N\dim U_\lambda^{(d)}}
 I_{U_\lambda^{(d)}}\otimes I_{[\lambda]}
 \quad (\ell(\lambda)\le2),
 \label{supp:eq:twirl}
\end{equation}
and zero otherwise. The coefficient is strictly positive for every
$\ell(\lambda)\le2$. The support of an average of positive operators is
the span of their supports (continuity extends the statement from
almost every rotation to every rotation). Thus $\operatorname{supp}T_N$
is $\mathcal K_N$, proving Eq.~\eqref{supp:eq:support} with all multiplicities.
\end{proof}

\begin{proof}[Proof of Theorem~\ref{supp:thm:ideal}]
\emph{All-instrument bound and normal output.}
Any complete instrument induces a rejection effect $0\le E\le I$ on the
$N$ supplied registers after independent ancillas and internal outcomes have
been eliminated. Zero rejection on every normal message implies, in
particular,
\begin{equation}
 \Tr E(P/2)^{\otimes N}=0\quad\text{for every }P.
\end{equation}
Positivity implies $E^{1/2}P^{\otimes N}=0$. Lemma~\ref{supp:lem:support}
therefore gives
\begin{equation}
 E=\Pi_{>2}E\Pi_{>2},\qquad 0\le E\le\Pi_{>2}.
 \label{supp:eq:effect-bound}
\end{equation}
No covariance or restriction on the competitor's internal measurement is
used. A strategy that adaptively requests at most $M$ independent references
can be supplied with $M$ registers at the outset, using its original causal
order and ignoring unused registers. It is again a complete instrument on
the fixed input. Thus Eq.~\eqref{supp:eq:effect-bound} applies to such strategies
and to independent ancillas of any size.

The instrument in Eq.~\eqref{supp:eq:instrument} is complete because the two
Kraus operators are complementary orthogonal projectors. Its accepted
operator is the identity on $\mathcal K_N$. For every normal $\omega_{SR}$,
\begin{equation}
 (K_a\otimes I_R)
 \bigl[(P/2)^{\otimes M}\otimes\omega_{SR}\bigr]
 (K_a\otimes I_R)
 =(P/2)^{\otimes M}\otimes\omega_{SR}.
 \label{supp:eq:remote-identity}
\end{equation}
This proves exact normal preservation, with acceptance probability one,
before or after discarding the references. Only the binary predicate is
read out. A Schur implementation must uncompute any label workspace;
measuring the full Young label and then forgetting it need not realize
Eq.~\eqref{supp:eq:remote-identity}.

\emph{Reduction to one orthogonal direction.}
The induced positive effect can be written
\begin{equation}
 R_P=\Tr_{\rm ref}\!\left[
 (\tau_P^{1/2}\otimes I)\Pi_{>2}(\tau_P^{1/2}\otimes I)\right],
 \qquad \tau_P=(P/2)^{\otimes M},
 \label{supp:eq:induced-effect}
\end{equation}
where $\tau_P^{1/2}=2^{-M/2}P^{\otimes M}$.
Since $\Pi_{>2}$ commutes with every $U^{\otimes N}$, $R_P$ commutes with
the stabilizer $U(P)\times U(Q)$. Hence $R_P=c_PP+c_QQ$.
Normal rejection vanishes, so $c_P=0$. The coefficient $c_Q$ can be computed
using any unit vector $v\in Q$, rather than averaging over $Q/(d-2)$.
Only $P\oplus\operatorname{span}\{v\}$ is then occupied, reducing the
calculation to three dimensions without changing the Schur projectors on
this support.

\emph{Sector weights, independently derived.}
For $a=M-b$ and $0\le b\le\lfloor M/2\rfloor$, Schur--Weyl duality and
the hook-length formula give
\begin{equation}
 f_{a,b}=\frac{a-b+1}{a+1}\binom Mb,\qquad
 p_{a,b}=\frac{f_{a,b}(a-b+1)}{2^M},\qquad \sum_b p_{a,b}=1.
 \label{supp:eq:weights}
\end{equation}
To obtain the addition probability without assuming a Clebsch--Gordan
coefficient, choose $P=\operatorname{span}(e_1,e_2)$ and $v=e_3$. Define
\[
A(t)=\operatorname{diag}(1/2,1/2,t,0,\ldots,0).
\]
Permutation symmetry
of the central Schur projector implies
\begin{equation}
 [t]\Tr\Pi_\nu A(t)^{\otimes N}
 =N\Tr\Pi_\nu\bigl[(P/2)^{\otimes M}\otimes|v\rangle\langle v|\bigr].
 \label{supp:eq:coefficient}
\end{equation}
Schur--Weyl duality makes the left side $f_\nu[t]s_\nu(1/2,1/2,t)$.
For a shape with more than two rows, a semistandard tableau with exactly
one entry $3$ exists precisely for $\nu=(a,b,1)$, $b\ge1$: that entry
occupies the unique third-row box. The remaining tableau has shape $(a,b)$.
The coefficient is consequently
$s_{a,b}(1/2,1/2)=(a-b+1)/2^M$. The hook-length formula also gives
\begin{align}
 f_{a,b,1}&=\frac{(M+1)!(a-b+1)(a+1)b}{(a+2)!(b+1)!},\nonumber\\
 \frac{f_{a,b,1}}{(M+1)f_{a,b}}&
 =\frac{b(a+1)}{(a+2)(b+1)}=:q_3(a,b).
 \label{supp:eq:branch}
\end{align}
The only at-most-two-row predecessor of $(a,b,1)$ is $(a,b)$, so this
ratio is indeed its conditional addition probability. For $b=0$ there is
no such successor and $q_3=0$. Summing Eq.~\eqref{supp:eq:coefficient}, including
the entire multiplicity $f_\nu$, proves
\begin{equation}
 c_Q=D(M)=\sum_{b=0}^{\lfloor M/2\rfloor}p_{M-b,b}q_3(M-b,b).
 \label{supp:eq:weighted-law}
\end{equation}
In particular, the largest individual branch probability is not the
unconditional detection probability.

\emph{Closed sum and endpoints.}
Since $1-q_3=(M+2)/[(a+2)(b+1)]$, put $L=M+3$, $t=b+1$. Direct factorial
cancellation in Eqs.~\eqref{supp:eq:weights}--\eqref{supp:eq:branch} yields
\begin{align}
 p_{a,b}(1-q_3)
 &=\frac{(L-2t)^2\binom Lt}{L(M+1)2^M},\\
 1-D(M)&=\frac{1}{L(M+1)2^M}
 \sum_{t=1}^{\lfloor(L-1)/2\rfloor}(L-2t)^2\binom Lt.
 \label{supp:eq:miss-sum}
\end{align}
The full binomial second moment is
\begin{equation}
 \sum_{t=0}^L(L-2t)^2\binom Lt=L2^L.
\end{equation}
The summands at $t$ and $L-t$ agree, the two endpoint terms are each $L^2$,
and the middle term is zero when $L$ is even. Thus the sum in
Eq.~\eqref{supp:eq:miss-sum} equals $L2^{L-1}-L^2$. Division gives
\begin{equation}
 1-D(M)=\frac4{M+1}-\frac{M+3}{(M+1)2^M},
\end{equation}
including $M=0,1$. This proves Eq.~\eqref{supp:eq:closed-law}, the induced effect,
and pointwise optimality from Eq.~\eqref{supp:eq:effect-bound}.
\end{proof}

\begin{corollary}[Detection targets and strictly zero-false-alarm budget]
For $0<D_0<1$, define
\[
M_{\rm Q}^{(0)}(D_0)=\min\{M:\Qzero(M)\ge D_0\}.
\]
This is the exact optimum reference budget with zero normal
false alarm and exact normal preservation. It is independent of $d$, and
\begin{equation}
 M_{\rm Q}^{(0)}(1/2)=7,\qquad
 M_{\rm Q}^{(0)}(9/10)=39,\qquad
 M_{\rm Q}^{(0)}(99/100)=399.
 \label{supp:eq:budgets}
\end{equation}
Moreover $M_{\rm Q}^{(0)}(1-\beta)=\Theta(1/\beta)$ and
$\lim_{\beta\downarrow0}\beta M_{\rm Q}^{(0)}(1-\beta)=4$.
\end{corollary}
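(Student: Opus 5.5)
Everything follows from Theorem~\ref{supp:thm:ideal}. That theorem already states that, among instruments with zero false alarm for every $P$ and budget $M$, no instrument detects better than $\Qzero(M)$, and that the binary instrument attains this while preserving every normal message exactly. So the zero-false-alarm budget for target $D_0$ is the least $M$ with $\Qzero(M)\ge D_0$, and the remaining work is about the function $\Qzero(M)$ alone.

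\emph{Step 1: the minimum is the threshold.} For the minimum to coincide with the first crossing of $D_0$, I need $\Qzero$ nondecreasing in $M$. Padding gives this directly, since an $M$-reference instrument can ignore one extra copy, so $\Qzero(M+1)\ge\Qzero(M)$. For strict increase from $M=2$ on, I would also check algebraically that
\[
\Qzero(M+1)-\Qzero(M)=\frac{4}{(M+1)(M+2)}-\frac{M+3}{(M+1)2^{M}}+\frac{M+4}{(M+2)2^{M+1}}
\]
is positive there. Independence of $d$ is immediate because $\Qzero(M)$ contains no $d$.

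\emph{Step 2: the three budgets.} I would evaluate the closed form at the candidate budgets and one below each.
\begin{itemize}
\item For $D_0=1/2$: $\Qzero(7)=1-\tfrac12+\tfrac{10}{8\cdot128}=261/512>1/2$, while $\Qzero(6)=1-\tfrac47+\tfrac{9}{7\cdot64}<1/2$, because $9/448<1/14$.
\item For $D_0=9/10$: at $M=39$ the miss probability is $\tfrac4{40}-\text{tiny}<1/10$. At $M=38$ it is $\tfrac4{39}-\tfrac{41}{39\cdot2^{38}}>1/10$.
\item For $D_0=99/100$: at $M=399$ the miss is $\tfrac4{400}-\text{tiny}<1/100$. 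At $M=398$ it is $\tfrac4{399}-\text{tiny}>1/100$, since $4/399-1/100=1/39900$ dwarfs the exponentially small correction.
\end{itemize}

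\emph{Step 3: asymptotics.} Set $\beta=1-D_0$, so the condition is $1-\Qzero(M)\le\beta$. From $\tfrac4{M+1}-\tfrac{M+3}{(M+1)2^M}\le\beta$, I would sandwich the threshold. The lower bound uses $\tfrac4{M+1}-\tfrac{M+3}{(M+1)2^M}\ge\tfrac{4}{M+1}-2^{-M}\cdot c$. The upper bound uses $\tfrac4{M+1}\le\beta$, which gives $M_{\rm Q}^{(0)}\le\lceil 4/\beta\rceil$. For the matching lower bound, any feasible $M$ satisfies $\tfrac4{M+1}\le\beta+\tfrac{M+3}{(M+1)2^M}$. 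Since $M\to\infty$ as $\beta\to0$, the exponential term is $o(\beta)$, so $M+1\ge 4/(\beta(1+o(1)))$. Together these give $\beta M_{\rm Q}^{(0)}(1-\beta)\to4$, and hence $\Theta(1/\beta)$.

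The only delicate point is bookkeeping the exponentially small correction against the $1/M^2$-size gaps near the thresholds, and that is routine.
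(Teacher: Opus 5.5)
Your proposal is correct and takes essentially the paper's route: Theorem~\ref{supp:thm:ideal} reduces everything to the closed form, and each budget is certified by evaluating $\Qzero$ at the candidate and its predecessor (your comparisons at $M=38$ and $M=398$ are exactly the paper's $410<2^{38}$ and $40100<2^{398}$); the limit then comes from sandwiching the miss probability near $4/(M+1)$, and the steps you leave implicit ($M\to\infty$ and the $o(\beta)$ claim) follow from $1-\Qzero(M)\ge1/(M+1)>0$. The one difference is monotonicity: you get $\Qzero(M+1)\ge\Qzero(M)$ from padding plus the optimality in Theorem~\ref{supp:thm:ideal}, which already suffices to reduce each check to its predecessor, whereas the paper proves strict increase for $M\ge1$ algebraically via $a_M=M^2+5M+8<8\cdot2^M$, so your unfinished positivity check of the difference is not needed.
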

\begin{proof}
The boundary values are $D(0)=D(1)=0$, $D(2)=1/12$. Subtraction gives
\begin{equation}
 D(M+1)-D(M)=
 \frac{8\,2^M-M^2-5M-8}{2^{M+1}(M+1)(M+2)}.
 \label{supp:eq:monotone}
\end{equation}
For $a_M=M^2+5M+8$, $a_0=8$ and
$2a_M-a_{M+1}=(M+1)(M+2)>0$. Thus $a_M<8\,2^M$ for every $M\ge1$,
proving strict increase from $M=1$ onward. Equation~\eqref{supp:eq:budgets}
follows by checking the displayed budget and its predecessor with rational
arithmetic. For example, $D(6)=201/448<1/2<D(7)=261/512$;
$D(39)>9/10$ whereas $D(38)<9/10$ is equivalent to $410<2^{38}$;
and $D(399)>99/100$ whereas $D(398)<99/100$ is equivalent to
$40100<2^{398}$.

The miss probability is positive for every finite $M$ and obeys
\begin{equation}
 \frac1{M+1}\le1-D(M)\le\frac4{M+1},\qquad
 1-D(M)=\frac4M+O(M^{-2})\quad(M\to\infty).
\end{equation}
The first bounds use $(M+3)/2^M\le3$. Monotonicity, the asymptotic
formula, and the defining inequalities at the minimizing integer and its
predecessor yield the budget limit. In particular, detection exactly one
is impossible at finite budget under the zero-false-alarm requirement.
\end{proof}

\subsection{Finite-precision coherent implementation}\label{supp:sm:schur-implementation}

For $N=M+1$ input registers of local dimension $d>N$, the corrected high-dimensional Schur circuit of Burchardt et al.~\cite[Theorem~1 and Table~1]{BurchardtEtAl2025} has gate complexity $\widetilde O(N^4)$; the suppressed factors are polynomial in $\log N$, $\log d$, and $\log(1/\eta_{\rm S})$. Their construction corrects the multiplicity-space basis change in Krovi's final step~\cite{Krovi2019}, retaining a coherent reversible transform. For $d\le N$, the original BCH complexity $\operatorname{poly}(N,d,\log(1/\eta_{\rm S}))$~\cite{BaconChuangHarrow2006} is already polynomial in $N$ and $\log(1/\eta_{\rm S})$. Combining these regimes gives a coarse $\operatorname{poly}(N,\log d,\log(1/\eta_{\rm S}))$ circuit bound for the present use.

This is a universal qubit-circuit compilation statement, with each qudit represented on $\lceil\log_2 d\rceil$ qubits and polynomial workspace; it imposes no geometric locality constraint or hardware error model. The numerical-gate compilation convention in Sec.~1.3, footnote~1 of Ref.~\cite{BurchardtEtAl2025} uses Solovay--Kitaev synthesis. To state the precision needed here, assign operator-norm error at most $\eta_{\rm S}/L$ to each of $L$ elementary operations in a transform circuit. Telescoping yields transform error at most $\eta_{\rm S}$ on the valid working space, with polynomial logarithmic synthesis overhead. This worst-case error allocation, rather than an average-input fidelity, is what enters the instrument bound below.

Let $U_{\rm S}$ denote a coherent Schur transform on $N=M+1$ registers,
including its workspace. The label operation $C_f$ copies only
$f(\lambda)=\mathbf1_{\ell(\lambda)>2}$ to a binary flag, using reversible
classical computation. With clean workspace and flag initialized by the
isometry $J$, the ideal dilation is
\begin{equation}
 V=U_{\rm S}^{\dagger}C_fU_{\rm S}J.
\end{equation}
The complete Young label is never measured. After the inverse transform,
only the flag is measured, and the references and workspace are traced out.
This produces the two Kraus branches in Eq.~\eqref{supp:eq:instrument}.

Write $\eta_{\rm S}$ for transform implementation accuracy, separately from
the normal tolerance $\epsilon$ and the source impurity $1-w$.
If the coherent transform and its inverse each have operator-norm error at
most $\eta_{\rm S}$ on their valid working spaces, telescoping the two
passes gives
\begin{equation}
 \|\widetilde V-V\|_\infty\le2\eta_{\rm S}.
 \label{supp:eq:schur-dilation-error}
\end{equation}
The flag computation preserves the valid Schur space. For a unitary circuit
approximating the forward isometry on clean inputs, its reversed circuit
has the same inverse error on the exact image: this follows by multiplying
$\widetilde U_{\rm S}-U_{\rm S}$ by $\widetilde U_{\rm S}^{\dagger}$.
Thus the inverse pass in Eq.~\eqref{supp:eq:schur-dilation-error} acts on the
required domain. The binary predicate can be computed exactly with
reversible gates; errors in any additional compiled operations must instead
be included in the total instrument error.

For isometries, the half-diamond distance of their channels is bounded by
their operator-norm distance, also on inputs entangled with an arbitrary
remote system. Measurement of the flag and partial traces are channels, so
contractivity bounds the complete flag--message instruments by
\begin{equation}
 \mu:=\tfrac12\|\widetilde{\mathcal I}-\mathcal I\|_\diamond
 \le 2\eta_{\rm S}.
 \label{supp:eq:schur-instrument-error}
\end{equation}
This transfers a coherent transform error to the full returned instrument,
rather than only to its decision probabilities.

For an ideal reference source and any complete-instrument error $\mu<1$,
the conditional-normalization bound in Eq.~\eqref{supp:eq:normalization-transfer} gives
\begin{equation}
 \alpha_{\rm wc}\le\mu,\qquad
 D\ge\Qzero(M)-\mu,\qquad
 \delta_g\le\frac{2\mu}{1-\mu}.
 \label{supp:eq:ideal-implementation-tolerance}
\end{equation}
Indeed, every normal input has ideal accepted probability one and exact
return. The noisy acceptance is at least $1-\mu$, while the accepted
subnormalized state differs in trace norm by at most $2\mu$. Normalization
then gives the displayed sufficient bound, uniformly over normal remote
extensions. For $M=7$ and $\mu=1/250$,
\begin{equation}
 \alpha_{\rm wc}\le0.004,\qquad
 \delta_g\le\frac2{249},\qquad
 D\ge\frac{32369}{64000}=0.505765625.
 \label{supp:eq:seven-copy-implementation}
\end{equation}
Consequently the seven-reference, $D_0=0.5$, $\epsilon=0.01$ example has
a nonzero complete-instrument error margin. These are conditional
implementation guarantees; no hardware realization at this $\mu$ is claimed.

\section{Exact collective MP posterior and decision bound}\label{supp:sm:mp-posterior}
\paragraph{General-dimensional ideal posterior law.}
Let $P$ be a Haar rank-two projector in $\mathbb C^d$, $d\ge3$, and $Q=I-P$.
For $M$ references $(P/2)^{\otimes M}$, compare the normal query $P/2$ with
the anomalous query $Q/(d-2)$. The exact collective posterior slope is
\begin{equation}
 \kappa_M^{(d,2)}(1)=1+\frac{M}{2(d-2)}
 -\frac{z(z+2)}{2(d-2)(M+2d-2)}.
\end{equation}
Here $z=M\bmod2$: the final term is zero for even $M$ and has numerator three for odd $M$.
The large-$M$ slope is $1/[2(d-2)]$.

To prove the formula, write an active reference Young label as
$\lambda=(a,b,0,\ldots,0)$, $a\ge b\ge0$, $a+b=M$, and its stabilizer branch as
$\mu=(u,v',0,\ldots,0)$ with $b\le u\le a$, $0\le v'\le b$.
The moment eigenvalue is $t_{a,b}=(a-b+1)/(2^M\dim V_{a,b}^{U(d)})$.
With imbalance $z=a-b$, the three Gelfand--Tsetlin weights are
\begin{align}
 q_1&=\frac{(a+1-u)(a+2-v')}{(z+1)(a+2)},\\
 q_2&=\frac{(u-b)(b+1-v')}{(z+1)(b+1)},\\
 q_3&=\frac{v'(u+1)}{(a+2)(b+1)}.
\end{align}
They sum to one and do not depend on $d$. The third-row moment vanishes, while
$t_{a+1,b}/t_{a,b}=(a+2)/[2(a+d)]$ and
$t_{a,b+1}/t_{a,b}=(b+1)/[2(b+d-1)]$ for valid additions.
Writing $g=B$ and $x=B/t_{a,b}$, the normal posterior eigenvalue has
\begin{equation}
 x=\frac{(a+1-u)(a+2-v')}{2(z+1)(a+d)}
 +\frac{(u-b)(b+1-v')}{2(z+1)(b+d-1)}>0.
\end{equation}
The anomalous eigenvalue is $h=(t_{a,b}-2B)/(d-2)$, hence
$h/g=(1/x-2)/(d-2)$. For fixed $a,b,u$, $x$ strictly decreases with $v'$.
At $v'=b$, its forward difference in $u$ is
$-(b+d-2)/[2(a+d)(b+d-1)]<0$. Thus its unique minimum for fixed $a,b$
occurs at $(u,v')=(a,b)$, with
\begin{equation}
 x_{a,b}=\frac{M+2d-2}{2(a+d)(b+d-1)}.
\end{equation}
Substitute $a=(M+z)/2$, $b=(M-z)/2$. The resulting ratio decreases
strictly with $z\ge0$, leaving $z=M\bmod2$ as the unique maximizer.
Reference sectors with more than two rows have both posterior eigenvalues
zero. The weighted posterior traces are $1/d$ each. Independent exact block
checks verify the branch coefficients and weighted traces for representative
dimensions and reference counts.

For a uniform task with $\alpha_{\rm wc}\le\epsilon$ and $D\ge D_0$, the posterior order requires $\kappa_M^{(d,2)}(1)\epsilon\ge D_0$. The order bounds the Haar-averaged detection of any MP receiver; uniform feasibility makes it a necessary budget condition without any covariance assumption on that receiver.

\paragraph{Zero normal false alarm implies pointwise zero detection.}
If $\alpha_{\rm wc}(P)=0$ for every $P$, then $\bar\alpha=0$ and the finite-budget posterior bound gives $0\le\bar D\le\kappa_M\bar\alpha=0$. Pad any adaptive reference schedule to its finite maximum budget $M$ and absorb its independent ancillas and internal records into its total rejection effect $E$, which is independent of $P$. Then
\begin{equation}
 D(P)=\Tr\left[E\left((P/2)^{\otimes M}\otimes\frac{I-P}{d-2}\right)\right].
 \label{supp:eq:mp-zero-pointwise}
\end{equation}
This is a nonnegative continuous function of $P$. Since Haar measure has full support on the Grassmannian, a positive value anywhere would give a positive integral on an open neighborhood. Thus $D(P)=0$ for every $P$. No covariance or accepted-output condition is imposed on the competing MP receiver.

\subsection{Independent tensor-space verification}
The exact scan uses rational arithmetic and never forms a $d^M$ matrix. To verify the contraction and normalization by a different method, a small calculation expands each Haar moment as
\begin{equation}
 T_k=\sum_{\pi\in S_k}c_\pi V_\pi,
 \qquad \sum_{\pi}d^{c(\sigma^{-1}\pi)}c_\pi=2^{c(\sigma)-k},
\end{equation}
where $c(\pi)$ is the number of cycles. For $k\le d$ the permutation Gram matrix is invertible. Contracting the last register of $T_{M+1}$ with $|0\rangle$ gives $\Omega_g$; the other posterior is $(T_M-2\Omega_g)/(d-2)$. Direct diagonalization on the common support agrees with the branch ratios in small dimensions and also verifies traces, positivity, and kernel agreement. This independent tensor-space calculation supports the contraction formulas; the branch optimization proves the result at arbitrary $M$.

The antisymmetry check uses signed permutation traces for $r=2,3,4$ and direct action on a normal logical Bell input.

\subsection{Contamination as a channel on the references}
For the general source $\rho_P(w)=wP/2+(1-w)Q/(d-2)$ and $2/d\le w\le1$, the depolarizing channel
\begin{equation}
 \mathcal L_w(X)=c_wX+(1-c_w)\Tr(X)I_d/d,
 \quad c_w=\frac{dw-2}{d-2},
\end{equation}
sends $P/2$ to $\rho_P(w)$. As the independent data likelihood is unchanged, $\Omega_{g,b}(w)=\mathcal L_w^{\otimes M}[\Omega_{g,b}(1)]$. Positivity therefore gives $\kappa_M^{(d,2)}(w)\le\kappa_M^{(d,2)}(1)$. This establishes the same necessary MP bound for these contaminated sources. Below we derive the coherent noisy channel at general $d$, followed by detailed dimension-four benchmarks.

\section{Measurement-first estimator and fixed-detection resource law}\label{supp:sm:estimator-resource-law}
Let $M_{\rm Q}^*(d,\epsilon,D_0)$ and $M_{\rm MP}^*(d,\epsilon,D_0)$
be minimum worst-case reference budgets for the uniform ideal-source task
$\alpha_{\rm wc}(P),\delta_g(P)\le\epsilon$, $D(P)\ge D_0$ for all $P$.
The exact MP posterior law is derived in Sec.~\ref{supp:sm:mp-posterior}; the
following estimator argument specifies the upper-bound constants and combines
them with Theorem~\ref{supp:thm:ideal}.

\begin{corollary}[Fixed $D_0$ separation]\label{supp:cor:resource}
For every fixed $D_0\in(0,1)$, every $d\ge3$, and
\begin{equation}
 0<\epsilon\le\min\{1/32,D_0/2,(1-D_0)/2\},
 \label{supp:eq:epsilon-range}
\end{equation}
there is a finite universal $C_1\ge1$ such that
\begin{align}
 M_{\rm Q}^*(d,\epsilon,D_0)&\le M_{\rm Q}^{(0)}(D_0),\\
 \left\lceil2(d-2)\left(\frac{D_0}{\epsilon}-1\right)\right\rceil
 &\le M_{\rm MP}^*(d,\epsilon,D_0)
 \le\left\lceil C_1d/\epsilon\right\rceil .
 \label{supp:eq:two-bounds}
\end{align}
In particular, $M_{\rm MP}^*=\Theta_{D_0}(d/\epsilon)$ uniformly on this
interval. The constants do not depend on $d$ or $\epsilon$.
\end{corollary}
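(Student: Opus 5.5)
The proof has three parts, one per inequality in Eq.~\eqref{supp:eq:two-bounds}, followed by a short asymptotic comparison.

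\emph{Coherent bound.} Theorem~\ref{supp:thm:ideal} gives the binary instrument~\eqref{supp:eq:instrument} with $\alpha_{\rm wc}=\delta_g=0\le\epsilon$ and $D(P)=\Qzero(M)$ for every $P$. Taking $M=M_{\rm Q}^{(0)}(D_0)$, which is finite because $\Qzero(M)\to1$, yields a feasible Q protocol. Hence $M^*_{\rm Q}\le M^{(0)}_{\rm Q}(D_0)$, and this budget does not depend on $d$ or $\epsilon$.

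\emph{MP lower bound.} Take any MP receiver that satisfies the uniform task with budget $M$. Pad it to exactly $M$ references, as in the posterior section. Uniform feasibility gives $\alpha(P)\le\alpha_{\rm wc}(P)\le\epsilon$ and $D(P)\ge D_0$ pointwise, so the Haar averages obey $\bar\alpha\le\epsilon$ and $\bar D\ge D_0$. The posterior order $\Omega_b(v)\preceq\kappa_M\Omega_g(v)$, integrated against the record-conditioned spectral decomposition, gives $D_0\le\bar D\le\kappa_M\bar\alpha\le\kappa_M\epsilon$. From Eq.~\eqref{eq:kappa}, $\kappa_M\le1+M/[2(d-2)]$. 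Therefore $M\ge2(d-2)(D_0/\epsilon-1)$, and since $M$ is an integer we may take the ceiling.

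\emph{MP upper bound.} Use the estimate-and-project receiver with $M=\lceil C_1d/\epsilon\rceil$ references.
\begin{itemize}
\item Since $\epsilon\le1/32<1$, we have $M\ge d$, so the support-risk bound $s\le C_1d/M\le\epsilon$ applies.
\item Eq.~\eqref{eq:upper} then gives $\alpha_{\rm wc}=s\le\epsilon$ and $\delta_g\le s\le\epsilon$.
\item For detection, $D=1-2s/(d-2)\ge1-2\epsilon\ge1-(1-D_0)=D_0$, using $\epsilon\le(1-D_0)/2$.
\end{itemize}

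The main obstacle lies inside this upper bound. First, one must prove the concentration statement $s\le C_1d/M$ for the covariant rank-one POVM estimator with the top-two eigenspace of $S_M$. The plan is to bound $\mathbb E\|S_M-\mathbb E S_M\|$ by a matrix Bernstein or covariance-estimation argument. The mean is $\mathbb E S_M=(P+I)/(d+1)\cdot d$-scaled, with a spectral gap of order $1$ between the range of $P$ and its complement. Davis--Kahan then converts the deviation into $1-\Tr(P\widehat P)/2\lesssim\mathbb E\|S_M-\mathbb ES_M\|_2^2/\text{gap}^2$, which is $O(d/M)$ because the Frobenius deviation squared scales like $d\cdot d/M$ over a gap of order $d$ after rescaling. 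Second, one must justify the conditional-channel estimate $\delta_g\le s$ for every normal $\omega_{SR}$. Writing $\mathcal A_P(\omega)=\mathbb E[\widehat P\omega\widehat P]$, the trace distance to $\omega$ after normalization is controlled by the gentle-measurement lemma applied to the average acceptance $1-s$. The key point is that the acceptance probability $\mathbb E\Tr(\widehat P\omega)$ is uniform over normal inputs by covariance, being $1-s$ for each. I expect the constant bookkeeping in this step, especially obtaining $s$ rather than $2\sqrt s$, to require an exact covariance computation of $\mathbb E[\widehat P\otimes\widehat P]$ restricted to $P$.

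\emph{Asymptotic comparison.} Since $\epsilon\le D_0/2$, we have $D_0/\epsilon-1\ge D_0/(2\epsilon)$, so the lower bound is at least $D_0(d-2)/\epsilon\ge D_0d/(3\epsilon)$ for $d\ge3$. The upper bound is at most $2C_1d/\epsilon$. This gives $M^*_{\rm MP}=\Theta_{D_0}(d/\epsilon)$.
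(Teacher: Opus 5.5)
Your three-part skeleton matches the paper's. The parts you actually carry out are correct: the coherent budget from Theorem~\ref{supp:thm:ideal}, the MP lower bound from $\bar D\le\kappa_M\bar\alpha$ with $\kappa_M\le1+M/[2(d-2)]$, the feasibility check at $M=\lceil C_1d/\epsilon\rceil$, and the order comparison via $\epsilon\le D_0/2$. The gap is that your top-level argument takes Eq.~\eqref{eq:upper} as given, while the main text defers its proof to exactly this corollary. Both of your plans for it lose the rate that makes the upper bound match $d/\epsilon$.

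\emph{Support risk.} Matrix Bernstein gives $\mathbb E\|S_M-T_P\|_\infty^2$ only of order $d\log d/M$. That yields a budget $O(d\log d/\epsilon)$ and no universal $C_1$. The Hilbert--Schmidt Davis--Kahan step is worse. Since $\|y\|^2=d$ exactly, $\mathbb E\|S_M-T_P\|_2^2=(d^2-\|T_P\|_2^2)/M\approx d^2/M$. Meanwhile $T_P=\frac d{d+1}(I+P/2)$ (not the mean you wrote) has gap $g=d/[2(d+1)]\ge1/3$. The ratio is scale invariant, so no rescaling produces a gap of order $d$, and you only get $s=O(d^2/M)$.

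What works, and what the paper does, is a dimension-uniform operator-norm bound for $e=\|S_M-T_P\|_\infty$:
\begin{itemize}
\item Cauchy--Schwarz with $\mathbb E_\nu w_P^2=3d/[2(d+1)]$ gives $\Pr\{|\langle u,y\rangle|>t\}\le\sqrt{3/2}\,e^{-t^2/3}$.
\item Scalar Bernstein over a $1/4$-net of $9^{2d}$ points then gives $\mathbb Ee^2\le C(d/M+d^2/M^2)$, with no $\log d$.
\item The rank-two structure finishes the argument. Because $\widehat P$ maximizes $\Tr(S_M\Pi)$ over rank-two projectors, $2g\ell=\Tr[T_P(P-\widehat P)]\le\Tr[(T_P-S_M)(P-\widehat P)]\le e\|P-\widehat P\|_1\le4e\sqrt\ell$. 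Hence $\ell\le4e^2/g^2$, and $s\le C_1d/M$ for $M\ge d$.
\end{itemize}

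\emph{Disturbance.} Gentle measurement cannot give a bound linear in $s$. For a fixed record and a pure normal input $\psi$, the normalized disturbance is exactly $\sqrt{1-\langle\psi|\widehat P|\psi\rangle}$. Any recordwise argument therefore gives only $\delta_g\lesssim\sqrt s$, which would force $M\gtrsim d/\epsilon^2$.

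The linear bound holds only for the record-averaged map, normalized afterwards. It comes from covariance, not from computing $\mathbb E[\widehat P\otimes\widehat P]$:
\begin{itemize}
\item With $h=\Tr(P\widehat P)/2$, the normalized accepted channel has entanglement fidelity $F=\mathbb Eh^2/\mathbb Eh\ge\mathbb Eh=1-s$.
\item In the Choi representation $(\overline U_P\otimes U_P)\oplus(\overline U_P\otimes U_Q)$, the logical Bell vector is the only invariant vector, so it is an eigenvector of the covariant Choi state.
\item Subtracting it gives the decomposition $F\,\id_P+(1-F)\mathcal N$, where $\mathcal N$ is a channel whose Choi operator is orthogonal to the Bell vector.
\item Convexity and the Bell input then give $\delta_g=1-F\le s$ exactly.
\end{itemize}
Your observation that normal acceptance equals $1-s$ for every input is correct. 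It is what lets the statewise definition of $\delta_g$ reduce to this channel comparison.
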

\begin{proof}
The quantum inequality follows by using the zero-false-alarm instrument. The
exact MP posterior slope is, with $z=M\bmod2$,
\begin{equation}
 \kappa_M^{(d,2)}(1)=1+\frac{M}{2(d-2)}
 -\frac{z(z+2)}{2(d-2)(M+2d-2)}.
 \label{supp:eq:kappa}
\end{equation}
For every MP receiver, $\overline D\le\kappa_M\overline\alpha$, where
the bars denote Haar averages of $D(P)$ and
$\alpha(P)=\Tr\calR_P(P/2)$, not of an unspecified query. Uniform
feasibility gives $D_0\le\overline D$ and
$\overline\alpha\le\sup_P\alpha_{\rm wc}(P)\le\epsilon$. Since
$\kappa_M\le1+M/[2(d-2)]$, the necessary budget in
Eq.~\eqref{supp:eq:two-bounds} follows. This decision bound imposes no assumption
on the quality of the accepted output. The parity-refined necessary integer
budget is $\min\{M\ge0:\epsilon\kappa_M^{(d,2)}(1)\ge D_0\}$;
it is not an achievable-budget claim.

For the upper bound, we use the following uniform-POVM estimator argument.
Measure each reference with the POVM
$d|x\rangle\langle x|d\nu(x)$, where $\nu$ is Haar probability measure
on the unit sphere. Under $P/2$ the outcome density is
$w_P(x)=d\langle x|P|x\rangle/2$. Set $y=\sqrt d\,x$ and
$S_M=M^{-1}\sum_i y_i y_i^\dagger$. The two-design identity gives
\begin{equation}
 T_P=\mathbb E S_M=\frac d{d+1}(I+P/2),\qquad
 g=\frac d{2(d+1)}\ge\frac13.
\end{equation}
For every unit vector $u$, Cauchy--Schwarz and the exact Haar tail give
\begin{equation}
 \Pr\{|\langle u,y\rangle|>t\}
 \le\sqrt{\frac32}\exp(-t^2/3),
\end{equation}
using $\mathbb E_\nu w_P^2=3d/[2(d+1)]$. Centered scalar quadratic forms
therefore have a universal subexponential norm. Scalar Bernstein
\cite[Theorem~2.8.1]{Vershynin2018} and a $1/4$ net of the complex unit sphere
with at most $9^{2d}$ points give, for universal positive $c$,
\begin{equation}
 \Pr\{\|S_M-T_P\|_\infty>2t\}
 \le2\exp\{2d\log9-cM\min(t^2,t)\}.
\end{equation}
Integrating after $t$ exceeds a universal multiple of
$\max(\sqrt{d/M},d/M)$ yields a finite universal $C$ with
\begin{equation}
 \mathbb E\|S_M-T_P\|_\infty^2
 \le C(d/M+d^2/M^2).
\end{equation}
Thus this is a dimension-uniform estimate, without a $\log d$ penalty.
The uniform-POVM covariance/net method also appears in
Gu\c{t}\u{a} et al.~\cite[Theorem~5]{GutaEtAl2020}; the calculation above uses the
present biased outcome distribution explicitly.

Let $\widehat P$ be the top-two sample projector and
$\ell=1-\Tr(P\widehat P)/2$. For $M\ge d$ it is equivariant almost
surely: the continuous sample law has a simple covariance spectrum outside
a polynomial null set. With $e=\|S_M-T_P\|_\infty$, the variational
principle gives
\begin{equation}
 2g\ell\le e\|P-\widehat P\|_1\le4e\sqrt\ell,
 \qquad s:=\mathbb E\ell\le C_1d/M \quad(M\ge d)
 \label{supp:eq:support-risk}
\end{equation}
for a finite universal $C_1$, enlarged to be at least one.

The actual accepted map, averaged over the discarded reference record, is
\begin{equation}
 \calA_P(X)=\mathbb E[\widehat P X\widehat P].
\end{equation}
Stabilizer covariance implies input-independent normal acceptance $1-s$.
Its normalized channel has entanglement fidelity~\cite{Schumacher1996}
\begin{equation}
 F=\frac{\mathbb E(1-\ell)^2}{1-s}\ge1-s.
\end{equation}
In the input-conjugate/output representation
$(\overline U_P\otimes U_P)\oplus(\overline U_P\otimes U_Q)$, the only
invariant vector is the logical Bell vector. Hence the normalized Choi
state has this vector as an eigenvector with weight $F$. Subtracting that
component leaves a positive Choi operator with input marginal
$(1-F)I_P/2$. The normalized channel is therefore
$F\id_P+(1-F)\mathcal N$ for a channel $\mathcal N$ with Choi support
orthogonal to the Bell vector. Convexity and the Bell input respectively
upper bound and attain the error, giving
\begin{equation}
 \alpha_{\rm wc}=s,\qquad
 \delta_g=\tfrac12\|\calA_P/(1-s)-\id_P\|_\diamond=1-F\le s,
 \qquad D=1-\frac{2s}{d-2}.
 \label{supp:eq:mp-output}
\end{equation}
These are record-averaged guarantees, not guarantees for every realized
reference record. Choosing $M=\lceil C_1d/\epsilon\rceil$ ensures
$M\ge d$, $s\le\epsilon$, and
$D\ge1-2\epsilon/(d-2)\ge D_0$ on Eq.~\eqref{supp:eq:epsilon-range}.
Finally $\epsilon\le D_0/2$ makes the lower bound at least
$(d-2)D_0/\epsilon\ge(D_0/3)d/\epsilon$, proving the asserted order.
\end{proof}

At $D_0=1/2$ and $\epsilon=1/100$, this gives the concrete uniform ideal
task
\begin{equation}
 M_{\rm Q}^*\le7,\qquad M_{\rm MP}^*\ge98(d-2).
 \label{supp:eq:ideal-example}
\end{equation}
For $d=8$, the MP lower bound in Eq.~\eqref{supp:eq:ideal-example} is $588$;
it is a necessary budget and does not assert an MP construction at $588$.
Seven is a feasible quantum budget at this positive tolerance, not a proved
positive-tolerance optimum. For $\epsilon\ge D_0$, the independent random
rejector $K_r=\sqrt{D_0}I$, $K_a=\sqrt{1-D_0}I$ uses zero references and
has $\alpha_{\rm wc}=D_0$, $\delta_g=0$, $D=D_0$. Both optimum budgets
are then zero, so the small-$\epsilon$ qualification is substantive.
The joint optimal law as $\beta=1-D_0$ and $\epsilon$ both vanish is not
established here. For one logical qubit encoded in $n_{\rm phys}$ physical
qubits, $d=2^{n_{\rm phys}}$; the dimension separation is a reference-sample
separation, not a time-complexity result.

\paragraph{Positive tolerance can reduce a feasible coherent budget.}
For the ideal source, take $M=2$, $p=1/50$, and the three-register antisymmetrizer $\Pi$. The complete Kraus representation
\begin{equation}
 K_a=\sqrt{1-p}(I-\Pi),\qquad
 K_{r,1}=\sqrt p\,I,\qquad K_{r,2}=\sqrt{1-p}\,\Pi
 \label{supp:eq:random-reject-instrument}
\end{equation}
has one acceptance flag and one rejection flag; the two rejection Kraus operators share the latter. Completeness follows from $\Pi^2=\Pi$. On every normal input, including any remote extension, the accepted map is $(1-p)\id$, so the normalized output is unchanged. Its scores are
\begin{equation}
 \alpha_{\rm wc}=\frac1{50},\qquad\delta_g=0,\qquad
 D=p+\frac{1-p}{12}=\frac{61}{600}>\frac1{10}.
\end{equation}
Thus $M_{\rm Q}^*(d,0.02,0.10)\le2$ for every $d\ge3$, whereas the exact zero-false-alarm budget for this ideal-source detection target is three. This construction leaves the positive-tolerance optimum unspecified. The exact three-reference certificate in Sec.~\ref{supp:sm:finite-certificates-hardware} concerns the distinct $d=4,w=0.90$ contaminated-source task.

\section{Five-percent source contamination in all dimensions}\label{supp:sm:source-noise}
Let $d\ge3$, $k=d-2$, and let the independent references be
\(\rho_P=wP/2+(1-w)Q/k\), with $2/d\le w\le1$.  Apply the two-reference
binary antisymmetrizer instrument, $K_a=I-\Pi_A$, $K_r=\Pi_A$, where
\(\Pi_A\) is the full three-register antisymmetrizer.  This general source
model is used for the analytical source-noise curves; the uniform corollary then
specializes to $d\ge4$ and $w=19/20$.

\paragraph{General source contraction.}
For $X=PXP$, the complete accepted map is
\begin{equation}
 \calA_P(X)=c_{\rm id}X+c_P\Tr(X)P/2+c_Q\Tr(X)Q/k,
 \label{supp:eq:general-noisy-map}
\end{equation}
where, with $\eta=1-w$,
\begin{align}
 \alpha&=\eta/6-\eta^2/(6k),&
 g_A&=\eta^2(k-1)/(18k),\nonumber\\
 c_{\rm id}&=1-2\alpha+g_A,&
 c_P&=w\eta/9,\qquad
 c_Q=\eta/18+\eta^2(k-2)/(18k).
 \label{supp:eq:general-noisy-coefficients}
\end{align}
The normal rejection is input independent and equals $\alpha_{\rm wc}=\alpha$;
the accepted probability is $A=1-\alpha$.  The designated anomaly has
\begin{equation}
 D=\bigl(1-w^2/2-\eta^2/k-2\eta/k+2\eta^2/k^2\bigr)/6.
 \label{supp:eq:general-noisy-D}
\end{equation}
To derive the full map, diagonalize $\rho_P$, with eigenvalues $w/2,w/2$ on
$P$ and $\eta/k$ on $Q$.  The diagonal antisymmetrizer contraction on a
message eigenvector of weight $t_i$ is
$(1-\Tr\rho_P^2-2t_i+2t_i^2)/6$, giving $\alpha$ or $D$ above.  Direct
signed-permutation contraction gives
\(\Tr_{12}[\Pi_A(\rho_P^{\otimes2}\otimes X)]=\alpha X\) on normal $X$.
For the remaining sandwich $G(X)=\Tr_{12}[\Pi_A(\rho_P^{\otimes2}\otimes
X)\Pi_A]\), matrix units suffice.  If $i\in\{0,1\}$ and the eigenvalues are
denoted $t_u$, then
\begin{equation}
 G(|i\rangle\langle i|)=\frac19
 \sum_{\substack{u<v\\u,v\ne i}}t_ut_v
 (|u\rangle\langle u|+|v\rangle\langle v|+|i\rangle\langle i|),
 \label{supp:eq:general-noisy-wedge}
\end{equation}
while $G(|0\rangle\langle1|)=g_A|0\rangle\langle1|$.  Each unordered
triple has two ordered reference strings of antisymmetric weight $1/6$ and a
one-register reduction of $1/3$.  Reading the logical and orthogonal
coefficients in Eq.~\eqref{supp:eq:general-noisy-wedge} gives
Eq.~\eqref{supp:eq:general-noisy-map}; the same matrix-unit calculation applies
blockwise to a normal input entangled with an untouched remote system.

The coefficients are nonnegative on the stated source interval.  Decomposing
the logical replacement into four Pauli conjugations, and using a logical Bell
input to attain the resulting convexity bound, gives
\begin{equation}
 \delta_g=\frac{3c_P/4+c_Q}{1-\alpha}.
 \label{supp:eq:general-noisy-delta}
\end{equation}

\begin{corollary}[Uniform five-percent source-contamination instance]
For all $d\ge4$, two independent references from the stated source yield
\begin{align}
 \alpha_{\rm wc}&=\frac{20k-1}{2400k}\le\frac1{120},\\
 \delta_g&=\frac{99k-4}{6(2380k+1)}\le\frac{33}{4760},\\
 D&=\frac{439k^2-82k+4}{4800k^2}\ge\frac{133}{1600}.
 \label{supp:eq:noisy-laws}
\end{align}
For the uniform task $\alpha_{\rm wc},\delta_g\le1/100$, $D\ge2/25$,
\begin{equation}
 M_{\rm Q}^*\le2,\qquad M_{\rm MP}^*\ge14(d-2).
 \label{supp:eq:noisy-budgets}
\end{equation}
Here both budgets refer to this contaminated-source task. The MP inequality
is necessary, not a construction at the indicated budget.
\end{corollary}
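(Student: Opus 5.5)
Specialize the general contaminated-source formulas of Eqs.~\eqref{supp:eq:general-noisy-coefficients}--\eqref{supp:eq:general-noisy-delta} to $w=19/20$, $\eta=1/20$, then bound the resulting rational functions of $k=d-2\ge2$ uniformly; the MP bound comes from the contamination-as-channel argument and the ideal posterior slope. All substantive physics is already in those displays, so what remains is exact rational algebra plus monotonicity in $k$.

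\emph{Step 1: closed forms.} Substitute $\eta=1/20$ into $\alpha=\eta/6-\eta^2/(6k)$ to get $\alpha=(20k-1)/(2400k)$; since this increases in $k$ toward $1/120$, $\alpha_{\rm wc}\le1/120$. Compute $c_P=w\eta/9=19/3600$ and $c_Q=\eta/18+\eta^2(k-2)/(18k)$. Insert these into $\delta_g=(3c_P/4+c_Q)/(1-\alpha)$ and clear denominators to reach $(99k-4)/[6(2380k+1)]$. This is a M\"obius function of $k$ with positive derivative sign, since $99\cdot1+4\cdot2380>0$, so its supremum is the $k\to\infty$ limit $99/14280=33/4760$. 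For $D$, substitute into Eq.~\eqref{supp:eq:general-noisy-D}: $1-w^2/2=439/800$, and the remaining terms give $(-82k+4)/(4800k^2)$ after scaling. The minimum over $k\ge2$ must be checked: $D(k)=439/4800-82/(4800k)+4/(4800k^2)$ is increasing for $k\ge1$, because its derivative has sign of $82k-8>0$. So the minimum is at $k=2$, namely $(1756-164+4)/19200=1596/19200=133/1600$. This is the step where arithmetic slips are most likely, so I would verify each fraction with exact rational arithmetic.

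\emph{Step 2: quantum budget.} The bounds $1/120<1/100$, $33/4760<1/100$, and $133/1600>2/25=128/1600$ show that the two-reference antisymmetrizer instrument is feasible for every $d\ge4$. Hence $M^*_{\rm Q}\le2$.

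\emph{Step 3: MP lower bound.} This is the step I regard as conceptually delicate. Any MP protocol on contaminated references composes with the $P$-independent depolarizing channel $\mathcal L_w^{\otimes M}$ (valid since $w\ge2/d$ for $d\ge4$) to give an MP protocol on ideal references with identical $\alpha(P)$, $D(P)$ and accepted maps. Therefore the ideal posterior inequality $\bar D\le\kappa_M\bar\alpha$ with $\kappa_M\le1+M/[2(d-2)]$ applies. Uniform feasibility gives $2/25\le\kappa_M/100$, so $M\ge2(d-2)(8-1)=14(d-2)$. I would stress that this is a necessary bound only, as the statement says.
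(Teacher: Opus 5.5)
Your proposal is correct and follows the paper's proof essentially step for step. You substitute $w=19/20$ into the general contaminated-source coefficients and bound the three rational functions of $k=d-2\ge2$ uniformly; the paper does this through the exact differences $1/120-\alpha=1/(2400k)$, $33/4760-\delta_g=9619/[14280(2380k+1)]$ and $D-133/1600=(20k-1)(k-2)/(2400k^2)$, which is equivalent to your monotonicity argument. You then transfer the ideal slope $\kappa_M\le1+M/(2k)$ through the $P$-independent depolarizing channel $\mathcal L_w^{\otimes M}$ to get $\kappa_M\ge8$ and hence $M\ge14k$; your simulation phrasing matches the paper's order-preservation of $\Omega_b\preceq\kappa_M\Omega_g$, since a channel followed by a collective POVM is again a collective POVM on the ideal references.
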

\begin{proof}
Substitution of $w=19/20$ into
Eqs.~\eqref{supp:eq:general-noisy-coefficients}--\eqref{supp:eq:general-noisy-delta}
gives the three displayed scores in the corollary.
The uniform bounds follow, for example, from the exact positive differences
\begin{align}
 \frac1{120}-\alpha&=\frac1{2400k},\\
 \frac{33}{4760}-\delta_g&=
 \frac{9619}{14280(2380k+1)},\\
 D-\frac{133}{1600}&=\frac{(20k-1)(k-2)}{2400k^2}.
\end{align}
The first two worst limits occur as $k\to\infty$; the detection minimum
occurs at $k=2$. The thresholds have strictly positive uniform margins,
so the two-reference protocol is feasible for the displayed task.

For the MP bound, the contaminated source is the image of $P/2$ under the
$P$-independent depolarizing channel
\begin{equation}
 \mathcal L_w(X)=cX+(1-c)\Tr(X)I_d/d,
 \qquad c=\frac{dw-2}{d-2}=\frac{19k-2}{20k}\in[0,1].
\end{equation}
Applying $\mathcal L_w^{\otimes M}$ to both ideal reference posterior
operators preserves their positive order. The same ideal slope
$\kappa_M\le1+M/(2k)$ thus remains a valid upper bound for the contaminated
source. Uniform feasibility requires $\kappa_M\ge(2/25)/(1/100)=8$;
therefore $M\ge14k$. In particular, at $d=8$ this gives
$M_{\rm Q}^*\le2$ and $M_{\rm MP}^*\ge84$.
\end{proof}

The contaminated-source corollary has a fixed modest detection target. It
does not extend the ideal arbitrary-$M$ zero-false-alarm result to high detection
under five-percent contamination. Source contamination and gate noise are
different models. An accepted message is consistent with the subspace
check; its logical action inside $P$ is not certified.

\section{Finite-sample estimator diagnostics}\label{supp:sm:estimator-diagnostics}
\paragraph{Finite-sample measurement-first estimator.}
For each dimension we fixed the evaluator frame so that
$P=\operatorname{diag}(1,1,0,\ldots,0)$ and drew independent outcomes of the
rank-one covariant POVM on the reference state $P/2$.  This use of $P$ is
strictly confined to the source/evaluator simulator: the learner receives only
the outcome vectors $x_i$.  Writing $t=\lVert Px\rVert^2$, the exact outcome
law is sampled as $t\sim\operatorname{Beta}(3,d-2)$, followed by independent
complex-Haar directions in $\mathbb C^2$ and $\mathbb C^{d-2}$.  The learner
forms $S_M=(d/M)\sum_i|x_i\rangle\langle x_i|$ and uses its top-two projector
$\widehat P_M$ as the accept effect, with $I-\widehat P_M$ as reject.

For each $d\in\{4,8,16,32,64\}$, we generated 64 independently seeded reference datasets and used the nested prefixes $M/d=2,4,\ldots,256$.  The predeclared
extension $M/d=512,1024$ was run because the initial grid did not cover the
$0.01$ target in all dimensions.  Prefix points within one seed are therefore
correlated; uncertainty bars at each fixed $(d,M)$ instead use the 64
independent datasets (nominal two-sided pointwise Student-$t$ 95\% intervals).

For a fixed learned projector let
$h=\operatorname{Tr}(P\widehat P_M)/2$ and
$e=1-h$.  The population support risk is $s=\mathbb E_{\mathcal D|P}e$, estimated by the dataset mean.  The accepted
logical channel is averaged over all reference records before conditioning.
Under exact population covariance, its normalized Choi fidelity with the ideal
logical identity is evaluated from the sufficient moments as
\begin{equation}
 F_{\rm acc}=\frac{\mathbb E_{\mathcal D|P} h^2}{\mathbb E_{\mathcal D|P} h},\qquad
 \delta_{\rm acc}=1-F_{\rm acc}.
\end{equation}
For the finite empirical ensemble, $1-F_{\rm acc}$ is consequently reported
only as a covariance-based estimate of this population $\delta$, rather than
as an empirical channel distance.  We separately save the full averaged
physical-output Bell Choi matrix
$J=\mathbb E_{\mathcal D|P}\,[\operatorname{vec}(\widehat P_M T)
\operatorname{vec}(\widehat P_M T)^\dagger/2]$, normalize it by its trace,
and directly report $\tfrac12\|J/\operatorname{Tr}J-\Phi_T\|_1$.  Thus the
calculation does not average conditional disturbances record by record.  We
also save $B=P\widehat P_M P$ in logical coordinates for every record and
report $\|\mathbb E_{\mathcal D|P} B-(\mathbb E_{\mathcal D|P} h)I_2\|_\infty/\mathbb E_{\mathcal D|P} h$ as a finite-
Monte-Carlo covariance diagnostic.  No evaluator-dependent twirl is applied
to the receiver.  We report normal acceptance $\mathbb E_{\mathcal D|P} h$ and the derived
orthogonal-anomaly rejection $D=1-2s/(d-2)$, each with pointwise
95\% intervals transformed from the 64 dataset-level support risks.

The threshold summary uses only the measured grid. At each $(d,M)$, a
``failing'' point has the lower pointwise interval endpoint above $\epsilon$;
a ``passing'' point has the upper endpoint at or below $\epsilon$. Other
points are inconclusive. The last failing and first passing grid points are
reported where present; a missing passing point is NOT\_REACHED and a gap
without either endpoint is INCONCLUSIVE. These pointwise classifications
are not a confidence interval for the optimal or continuous reference budget.

\begin{figure}[t]
\centering
\includegraphics[width=0.72\linewidth]{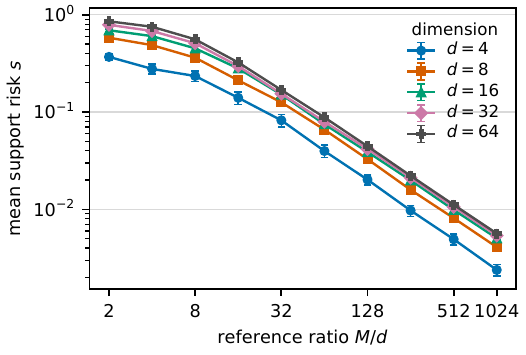}
\caption{Saved estimator support-risk diagnostics. Points show the mean
$s=\mathbb E_{\mathcal D|P}[1-\Tr(P\widehat P_M)/2]$ and bars are nominal
two-sided pointwise Student-$t$ 95\% intervals over 64 independent reference
datasets at fixed $(d,M)$. Curves only connect nested-prefix display points;
they are not fitted scaling laws and do not provide a joint confidence band.}
\label{supp:fig:sm-estimator-risk}
\end{figure}

\section{Storage-interface benchmark and crosstalk diagnostics}\label{supp:sm:application-diagnostics}
\paragraph{Leakage-interface benchmark.}
For every instance the source samples an independent Haar frame $U=(u_1,\ldots,u_d)\in U(d)$, defines $T=(u_1,u_2)$, $P=TT^\dagger$, and reserves $V=(u_3,u_4)$ for the source-defined fault. The coherent receiver receives two message-independent mixed references $\rho_{P,w}=wP/2+(1-w)(I-P)/(d-2)$ and one physical query, but never $U,T,P$, or $V$. We use $w\in\{1,0.95\}$ and the coherent cross-talk $\exp[-i\theta(VT^\dagger+TV^\dagger)]$, with $\sin^2\theta=\zeta\in\{0,0.1,0.25,0.5,1\}$.

The coherent receiver implements the complete binary instrument $K_{\rm a}=I-\Pi_A$ and $K_{\rm r}=\Pi_A$ on both references and the query. Exact low-rank Kraus contractions propagate both branches and the remote qubit $R$; accepted and rejected $R$--output Choi states are saved. The normal $\alpha_{\rm wc}$ is reconstructed from the full logical accept effect. Bell, nonmaximally entangled, and six-pure-state distances are finite-state, reference-extended diagnostics, not diamond norms.

For MP, covariant rank-one measurements yield the top-two eigenspace $\widehat P$ of $S_M=(d/M)\sum_i|x_i\rangle\langle x_i|$, with actual $M=2,8,32,128,512$. The exact Born sampler draws $t\sim\mathrm{Beta}(3,d-2)$ with probability $w$ and $t\sim\mathrm{Beta}(2,d-1)$ otherwise. After references are consumed, the physical query instrument is $\widehat P(\cdot)\widehat P$ and $(I-\widehat P)(\cdot)(I-\widehat P)$. We average unnormalised Choi channels over eight independent frames/datasets in an evaluator-only canonical frame before conditional distances are evaluated; recordwise distances are diagnostics only.

With fault fraction 0.1, $A=0.9A_0+0.1A_1$ and $G=0.9G_0+0.1G_1$, where $G$ is unnormalised overlap with the intended clean source-reference target; $G/A$ is reported separately. Known-$P$ is an explicitly privileged oracle. Logical $X,Z$ controls remain in $P$ and delimit the task as leakage detection.

\paragraph{Matched-ensemble converse.}
The coherent fault defined above is defined by a full Haar frame, including the partner modes $V$. Averaging their phases eliminates the $P$--$Q$ coherences and averaging their directions gives the anomalous marginal
$(1-\zeta)P/2+\zeta Q/(d-2)$. Therefore the decision converse becomes
\begin{equation}
 \bar D_\zeta\le[1+\zeta(\kappa_M-1)]\bar\alpha.\label{supp:eq:haar-crosstalk}
\end{equation}
Equation~\eqref{supp:eq:haar-crosstalk} concerns the matched full-Haar ensemble and any task uniform over its frames. For a fixed frame or a structured encoder distribution, its likelihood operators must be evaluated separately. The posterior inequality supplies the full-class bound; the simulated MP points characterize the specified estimator.

\paragraph{Estimator diagnostics and fault traffic.}
The saved $d=8,w=1$ MP estimator with $M=2$ has normal worst-case
rejection $0.827885$ and fault rejection $0.771780$ at $\zeta=1/2$.
At $M=2$ this estimator exceeds the $0.01$ normal-rejection limit; its fault
rejection is therefore shown as a diagnostic in the SM. At $M=512$ its saved
normal worst-case rejection is $0.049032$. The eight independent simulated
datasets give nominal pointwise intervals for linear metrics, while the
posterior theorem supplies the uniform MP comparison.
For the fixed 10\% fault traffic and $\zeta=1/2$, the saved branch data give
\begin{center}\small
\begin{tabular}{lccc}
\toprule
receiver and source & $A$ & $G$ & $G/A$\\
\midrule
Q, $M=2,w=1$ & 0.99583 & 0.95069 & 0.95467\\
Q, $M=2,w=0.95$ & 0.98772 & 0.93634 & 0.94799\\
MP, $M=2,w=1$ & 0.27841 & 0.08770 & 0.31499\\
MP, $M=512,w=1$ & 0.90828 & 0.86596 & 0.95341\\
Direct & 1.00000 & 0.95000 & 0.95000\\
Known-$P$ oracle & 0.95000 & 0.95000 & 1.00000\\
\bottomrule
\end{tabular}
\end{center}
Here $G$ is unnormalized clean-Bell overlap, while $G/A$ conditions on
acceptance. At $w=0.95$, Q has $G/A=0.94799$ against $0.95000$ for direct
delivery, alongside its greater fault rejection under the stated normal
protection specification. The oracle uses the source projector $P$.
For $w=1$, code-space logical $X$ and $Z$ are accepted with probability
one and produce Bell states orthogonal to the intended message.

\section{Finite dimension-four certificates, binary payload, and hardware noise}\label{supp:sm:finite-certificates-hardware}
\subsection{Contaminated collective-MP posterior blocks in dimension four}
For a unit vector \(v\), define
\begin{align}
 \Omega_g^{(M)}(v)&=\mathbb E_{P\sim\mathrm{Haar}}\rho_P^{\otimes M}\braket{v|P|v}/2,\\
 \Omega_b^{(M)}(v)&=\mathbb E_{P\sim\mathrm{Haar}}\rho_P^{\otimes M}\braket{v|Q|v}/2.
\end{align}
Schur--Weyl duality and \(U(4)\downarrow U(1)\times U(3)\) branching give scalar posterior blocks \(g_{\lambda\mu}\) and \(b_{\lambda\mu}\). Writing \(a=w/2\), \(b=(1-w)/2\),
\begin{equation}
 t_\lambda=\frac{s_\lambda(a,a,b,b)}{\dim V_\lambda},\qquad B_{\lambda\mu}=\sum_iq_i(\lambda,\mu)t_{\lambda+e_i},
\end{equation}
where \(\lambda\vdash M\) has at most four rows and \(\lambda_i\ge\mu_i\ge\lambda_{i+1}\). Write \(l_i=\lambda_i+4-i\), \(m_j=\mu_j+3-j\). The fundamental Gelfand--Tsetlin weights \cite{GrinkoBurchardtOzols2023} are
\begin{equation}
 q_i(\lambda,\mu)=\frac{\prod_{j=1}^3(l_i-m_j)}{\prod_{k\ne i}(l_i-l_k)},
 \qquad\lambda+e_i\text{ a partition}.
\end{equation}
They are nonnegative and sum to one. Each posterior block has dimension \(f_\lambda\dim V_\mu^{U(3)}\), including the full permutation multiplicity \(f_\lambda\). All dimensions sum to \(4^M\). Using \(\langle v|\rho_P|v\rangle=b+(a-b)\langle v|P|v\rangle\) yields
\begin{equation}
 g_{\lambda\mu}=\frac{B_{\lambda\mu}-bt_\lambda}{2(a-b)},\qquad b_{\lambda\mu}=\frac{t_\lambda}{2}-g_{\lambda\mu}.
\end{equation}
At \(w=1/2\), both are \(t_\lambda/4\). Their common support gives
\begin{equation}
 \Omega_b^{(M)}(v)\preceq\kappa_M(w)\Omega_g^{(M)}(v),\qquad \kappa_M=\max_{g_{\lambda\mu}>0}b_{\lambda\mu}/g_{\lambda\mu}.
\end{equation}

The support statement includes \(w=1\): for a vector \(x\), the function \(f_x(P)=\langle x|\rho_P^{\otimes M}|x\rangle\) is nonnegative. Both \(P_{vv}\) and \(Q_{vv}\) are positive Haar-almost everywhere. Thus their weighted integrals vanish exactly when \(\int f_x=0\), proving \(\ker\Omega_g=\ker\Omega_b=\ker T_M\). Every ratio is evaluated on this common support.

Let \(\{E_y\}\) be any joint reference POVM and let the rejection effect of the conditional data instrument be \(R_y=\sum_jr_{yj}\ket{v_{yj}}\bra{v_{yj}}\), \(0\leq R_y\leq I\). Then
\begin{align}
 \bar\alpha&=\sum_{yj}r_{yj}\Tr[E_y\Omega_g^{(M)}(v_{yj})],\\
 \bar D&=\sum_{yj}r_{yj}\Tr[E_y\Omega_b^{(M)}(v_{yj})]\leq\kappa_M\bar\alpha.
\end{align}
The same proof applies to continuous POVMs by integration. It admits independent quantum ancillas and arbitrary accepted quantum data output, and includes reference-preprocessing failures as outcomes. For two references,
\begin{equation}
 \kappa_2(w)=\frac{-5w^2+6w+2}{-5w^2+4w+3}.
\end{equation}

\subsection{Two-reference blocks and the sharp ROC slope}
For \(v=|0\rangle\), the stabilizer \(U(1)\times U(3)\) and reference swap give five sectors. Their dimensions and eigenvalues are
\begin{center}\small
\begin{tabular}{ccll}
\toprule
sector & dim. & \(240g\) & \(240b\)\\
\midrule
\(vv\) &1&\(3(w^2+1)\)&\(3(w^2-2w+2)\)\\
one \(v,+\)&3&\(3w^2-2w+4\)&\(3w^2-4w+5\)\\
one \(v,-\)&3&\(-5w^2+6w+2\)&\(-5w^2+4w+3\)\\
no \(v,+\)&6&\(3w^2-4w+5\)&\(3w^2-2w+4\)\\
no \(v,-\)&3&\(-5w^2+4w+3\)&\(-5w^2+6w+2\)\\
\bottomrule
\end{tabular}
\end{center}
An independent derivation expands \(\rho_P=(1-w)I/2+(2w-1)P/2\) and evaluates moments through degree three:
\begin{equation}
 \mathbb E\prod_a P_{i_a j_a}
 =\sum_{\sigma,\tau\in S_n}
 \prod_a\delta_{i_a,j_{\sigma(a)}}
 \mathrm{Wg}_4(\sigma^{-1}\tau)\,2^{c(\tau)}.
\end{equation}
Here \(\mathrm{Wg}_4\) is the inverse permutation Gram matrix with entries \(4^{c(\sigma^{-1}\tau)}\). This reconstructs the full 16-dimensional operators, including zero off-block entries.

Let \(L=3+4w-5w^2>0\). In table order the slacks \(\kappa_2 g-b\) are
\begin{align}
&\frac{(2w-1)(1+w-w^2)}{20L},\
 \frac{(2w-1)(7+2w-2w^2)}{240L},\nonumber\\
&\frac{(2w-1)(1+2w-2w^2)}{48L},\
 \frac{(2w-1)^3}{120L},\ 0.
\end{align}
All are nonnegative for \(1/2\le w\le1\); the final block attains the maximum. At \(w=0.9\), the slope is \(67/51\). An attainable local segment uses reference effect \(s\Pi_{v^\perp,-}\), \(0\le s\le1\), and its positive complement. On that outcome only, the data rejection/acceptance Kraus operators are \(|v\rangle\langle v|\) and \(I-|v\rangle\langle v|\); the other reference outcome accepts identically. It has \(\alpha=51s/1600,D=67s/1600\). At \(s=76/153\) it reaches the MP bound at the binary detector's false alarm. Haar randomization of the chosen known \(v\) makes the scores independent of \(P\). This proves the ROC segment, without optimizing its output disturbance.

\subsection{Coherent filter and output channel}
The antisymmetrizer on two references and data is \(\Pi_A=\frac16\sum_{\pi\in S_3}\operatorname{sgn}(\pi)V_\pi\). For filter strength \(0\le h\le1\), the complete binary instrument is
\begin{equation}
 K_a=I-h\Pi_A,\qquad K_r=\sqrt{2h-h^2}\Pi_A.
\end{equation}
It yields \(\alpha=(2h-h^2)(1-w^2)/12\) and \(D=(2h-h^2)(2w-w^2)/12\). For normal logical operators,
\begin{equation}
 \mathcal E_g(X)=c_{\rm id}X+c_P\Tr(X)P/2+c_Q\Tr(X)Q/2,
\end{equation}
where
\begin{align}
 c_{\rm id}&=1-h(1-w^2)/6+h^2(1-w)^2/36,\\
 c_P&=h^2w(1-w)/9,\qquad c_Q=h^2(1-w)/18.
\end{align}
Its nonnegative Weyl/replacement decomposition proves
\begin{equation}
 \delta_g=\frac{3c_P/4+c_Q}{c_{\rm id}+c_P+c_Q}=\frac{h^2(1-w)(3w+2)}{36(1-\alpha)}.
\end{equation}
To justify the reference-extended norm equality, decompose the \(P/2\) replacement into the four logical Weyl unitaries. The identity weight is \(G_g=c_{\rm id}+c_P/4\), the other three weights are \(c_P/4\), and the \(Q/2\) replacement weight is \(c_Q\). Convexity bounds the half-diamond error by \((3c_P/4+c_Q)/A_g\). On a logical Bell input, all nonidentity Weyl outputs and the complementary replacement are orthogonal to the target; hence this bound is attained. The nonnegative channel decomposition and the saturating Bell input establish the reference-extended norm equality.

Direct Kraus contraction determines the map on every matrix unit. In the canonical basis put \(t_i=w/2\) for \(i<2\), and \(t_i=(1-w)/2\) otherwise. Expand \(K=\sum_\pi c_\pi V_\pi\), apply it to \(|a,b,i\rangle\langle a,b,j|\), weight by \(t_at_b\), and trace the first two registers. Keeping precisely those terms whose two traced indices match gives \(c_{\rm id},c_P,c_Q\) above. Linearity extends this calculation to an untouched remote reference.

The ROC separation of the binary family is
\begin{equation}
 D-\kappa_2\alpha
 =\frac{(2h-h^2)(2w-1)(1+2w-2w^2)}
 {6(3+4w-5w^2)}>0
\end{equation}
for \(w>1/2\) and \(0<h\le1\). At \(w=1/2\), \(D=\alpha\); at \(w=1\), the normal output is exactly preserved.

\subsection{Exact-certificate definition and full objective}
An exact finite certificate consists of rational matrix-generation rules and rational coefficients that establish (i) a feasible accepted Choi matrix $J$ and positive reject completion, (ii) its scores and complete normal map, or (iii) a positive dual inequality for every accepted Choi matrix. A numerical solver output is a candidate for such a certificate, not its positivity proof.

Use the input-first, unnormalized Choi convention on $B^MDS$, where $B^M$
carries the $M$ reference inputs, $D$ is the message input, and $S$ is the
retained message output:
\begin{equation}
 J=\sum_{i,j}|i\rangle\langle j|_{B^MD}\otimes\mathcal I_a(|i\rangle\langle j|)_S.
\end{equation}
Thus $\mathcal I_a(X)=\Tr_{B^MD}[(X^T\otimes I_S)J]$, $J\succeq0$, and $\Tr_SJ\preceq I_{B^MD}$. If $Z=I-\Tr_SJ\succeq0$, the physical reject effect is $Z^T$ in this convention; $X\mapsto\Tr(Z^TX)|{\rm erase}\rangle\langle{\rm erase}|$ completes a trace-preserving instrument.

At the canonical $P_0=|0\rangle\langle0|+|1\rangle\langle1|$, let $\tau=\rho_{P_0}(9/10)^{\otimes M}$. Define
\begin{align}
 C_{A_g}^0&=(\tau\otimes P_0/2)^T\otimes I_S,\\
 C_{A_b}^0&=(\tau\otimes Q_0/2)^T\otimes I_S,\\
 C_{G_g}^0&=\tfrac14\tau^T\otimes|\Omega_2\rangle\langle\Omega_2|_{DS},
 \quad |\Omega_2\rangle=|00\rangle+|11\rangle.
\end{align}
Write $\overline C$ for the Haar twirl under $\overline U^{\otimes(M+1)}\otimes U$. Then $A_g=\Tr(\overline C_{A_g}J)$, $A_b=\Tr(\overline C_{A_b}J)$, $G_g=\Tr(\overline C_{G_g}J)$, and $D=1-A_b$. $G_g$ is the unnormalized logical Bell overlap, accounting for the factor $1/4$.

The complete covariant ROC optimization is
\begin{align}
 \text{maximize}\quad &1-\Tr(\overline C_{A_b}J)\nonumber\\
 \text{subject to}\quad &J\succeq0,\quad \Tr_SJ\preceq I,\nonumber\\
 &\Tr(\overline C_{A_g}J)\ge1-\alpha_0,\label{supp:eq:fullobjective}\\
 &\Tr[\{\overline C_{G_g}-(1-\delta_0)\overline C_{A_g}\}J]\ge0,\nonumber
\end{align}
with $J$ covariant and reference-permutation invariant. The stabilizer gives the normal map $aX+b\Tr(X)P/2+c\Tr(X)Q/2$. Complete positivity makes the logical Pauli weights $a+b/4,b/4,b/4,b/4$ and the orthogonal replacement weight $c$ nonnegative. Accordingly $\delta_g=1-G_g/A_g$, so the last constraint is the full reference-extended normal criterion in this covariant problem. It is not an identity for arbitrary channels inferred from Bell fidelity alone.

The proof of the fixed-budget theorem has three parts. A rational $J_3$ below attains the task for all $P$. A positive full-space M2 dual bounds every quantum instrument, even without the disturbance constraint. The posterior MP19 inequality bounds every reference-first MP instrument, also without disturbance. Padding by ignored references excludes all smaller maximum budgets. The first two are independently checkable from the rational data; the third is reconstructed by exact branching, not by a $4^{19}$ matrix.

\subsection{Exact three-copy resource certificate}
At \(w=9/10\), an exact rational three-reference instrument has \(\alpha=0.01989210100597\), \(\delta_g=0.01955045429812\), and \(D=0.11757171020084\). Its exact M2 dual bounds \(D\le4501613865629/50000000000000<0.095\) under the specified normal constraints. Together with the M19 MP inequality below, it establishes \(M_Q^*=3\), \(M_{MP}^*\ge20\) for \(\alpha,\delta_g\le0.02\), \(D\ge0.10\).

\subsection{Full-space rational representation and repair}
The certificate specifies rational coefficients of
\begin{equation}
 H_\pi=\frac{1}{2|S_M|}\sum_{s\in S_M}
 \left[\operatorname{PT}_{B^MD}V_{s\pi s^{-1}}
 +\operatorname{PT}_{B^MD}V_{s\pi^{-1}s^{-1}}\right].
\end{equation}
These are explicit rational matrices in the computational basis, commuting with \(\overline U^{\otimes(M+1)}\otimes U\) and the reference permutations. The 22 real M3 generators retain every mixed representation sector and multiplicity. Rounded coefficients of denominator \(10^{12}\) define \(J_{\rm rat}\). A single fixed repair is
\begin{align}
 J_3=\frac{9997}{10000}\Big[&
 \left(1-\frac1{50}-\frac1{10000}\right)J_{\rm rat}\nonumber\\
 &+\frac1{50}J_{\rm id}+\frac1{10000}J_{\rm dep}\Big].
\end{align}
\(J_{\rm id}\) traces references and returns data; \(J_{\rm dep}\) traces the input and returns \(I/4\). The identity admixture supplies quality and acceptance slack, the depolarizing admixture supplies CP interior, and the overall scale supplies TNI slack.

The sparse full \(1024\times1024\) Choi matrix has 80 torus-weight blocks. The \(256\times256\) input complement has 35. All off-block elements vanish exactly. For every distinct rational block \(A\), a floating Cholesky proposes an invertible rational triangular matrix \(R\), rounded to denominator \(10^8\). Exact arithmetic proves that \(RAR^T\) has positive diagonal and strict row diagonal dominance. Gershgorin and invertible congruence prove positivity without using a floating eigenvalue sign.

All four complex logical matrix units and all output entries are checked exactly to have the Weyl/replacement form. Its coefficients are
\begin{align}
 a&=766409138354447462258527/(8\,10^{23}),\\
 b&=4695874183145070714789/(4\,10^{23}),\\
 c&=4142716237243187997237/(4\,10^{23}).
\end{align}
The corresponding \(A_g=a+b+c\), \(G_g=a+b/4\), and \(\delta_g=(3b/4+c)/A_g\) meet the stated dimension-four task. Covariance transfers it to all \(P\). The positive reject effect can be completed with an erasure output. Every branch consumes three references, so maximum and expected counts coincide.

\subsection{Unrestricted M2 dual}
Rebuild \(\overline C_{A_g},\overline C_{A_b}\) by exact rational projection onto the complete commutant and verify residual orthogonality to every permutation generator. The certificate supplies \(Y\succeq0,\lambda\ge0\) such that
\begin{equation}
 \overline C_{A_b}+Y\otimes I-\lambda\overline C_{A_g}\succeq0.
\end{equation}
For every full accepted \(J\succeq0,\operatorname{Tr}_S J\preceq I\), this gives \(A_b\ge\lambda A_g-\operatorname{Tr}Y\). With \(A_g\ge49/50\) the stated strict M2 detection bound follows. No quality multiplier is needed. The rational reconstruction is tested in the complete input--output space, not only in reduced coordinates.

\subsection{Exact positivity and posterior checks}
\begingroup\raggedright
Exact positivity follows from rational congruence and strict diagonal dominance; the posterior tables reproduce the multiplicities and both traces $1/4$.
\par\endgroup

\paragraph{Measurement-first budget.}
For nineteen contaminated references the exact maximum is
\[
\kappa_{19}(9/10)=\frac{1513309885517371967}{304284815783363047}<5.
\]
Hence $D<0.10$ whenever $\alpha\le0.02$, excluding all maximum budgets through 19. This is distinct from the $d=8,w=0.95$ task.
\subsection{Equal accepted effects can have unequal payloads}
At \(h=1\), let \(S,T,A\) denote the three Young projectors. The coherent accepted Kraus operator is \(S+T\). Measuring and forgetting the full label instead produces \(X\mapsto SXS+TXT\) before references are discarded. Both have accepted effect \(S+T\). At \(w=1\), coherent binary processing has \(A_g=G_g=1\), while full-label readout gives \(A_g=1\), \(G_g=2/3\). Direct contraction of the six permutation terms gives on normal logical operators
\begin{equation}
 \mathcal A_{\rm label}(X)=\frac59X+\frac49\Tr(X)P/2\qquad(w=1).
\end{equation}
Thus $A_g=5/9+4/9=1$ and $G_g=5/9+(4/9)/4=2/3$. This verifies the channel distinction on every logical matrix unit.

\subsection{Native binary dilation and a bounded noise statement}
Let $S=S_{12}$ and $C=S_{12}S_{23}$. The permutation identity
\begin{equation}
 \Pi_A=(I-S)(I+C+C^2)/6
\end{equation}
allows coherent swap-parity and cycle-phase extraction. The cycle-invariant subspace carries the symmetric and antisymmetric one-dimensional representations; the swap-odd predicate selects the latter. The other Young labels are workspace, uncomputed after copying only this predicate to the flag. References and workspace are traced at the end, while the data are retained. The gate basis is all-to-all $R_z,R_y,\mathrm{CX}$, with final flag measurement and no intermediate measurement/reset. The $M=2,d=4,h=1$ binary dilation constructed for this study uses ten receiver qubits, 208 CX, and 348 one-qubit native gates, with depth 257 and CX depth 123 under its declared all-to-all ideal basis. Independent checks at \(h=0,0.35,1\) return work registers to zero. Three work qubits supplement the six reference/data qubits and one flag. The structured circuit reduces the gate count relative to an earlier direct construction by trading additional workspace for fewer gates. General Schur-transform circuits provide implementation background \cite{BaconChuangHarrow2006}. The count above belongs to this $M=2,d=4$ permutation/phase-extraction circuit, with the accepted channel fixed by compute/uncompute. It is not a gate count for the seven- or 399-reference constructions. The correction to Krovi's general Schur algorithm does not affect either this direct permutation circuit or the abstract projection theorem.

For local Markov Pauli channels after every gate, let the total nonidentity probabilities be \(p_2=10^{-5}\) for CX and \(p_1=10^{-6}\) for one-qubit gates. With preparation, idle, and other errors set to zero, each channel has half-diamond error exactly its total nonidentity probability. Telescoping gives \(\mu_{\rm inst}\le208p_2+348p_1=0.002428\). Define the Haar-averaged MP witness
\begin{equation}
 W=\overline D-\kappa_2(0.9)\overline\alpha,
 \qquad \overline D=\mathbb E_P D(P),\quad
 \overline\alpha=\mathbb E_P\alpha(P),
 \label{supp:eq:mp-witness}
\end{equation}
where \(\alpha(P)\) is the false-alarm probability on the normal \(P/2\)
input. The posterior inequality gives \(W\le0\) for every MP receiver with at
most two references, including arbitrary collective reference measurements. At \(w=0.9,h=1\), the ideal values are
\(\alpha=(1-w^2)/12=19/1200\) and \(D=(2w-w^2)/12=33/400\). Thus the
uniform bounds \(D(P)\ge D_{\rm low}=33/400-\mu\) and
\(\alpha_{\rm wc}(P)\le\alpha_{\rm up}=19/1200+\mu\), together with
\(\kappa_2(0.9)=67/51\) and \(\mu=0.002428\), imply
\begin{equation}
 W\ge D_{\rm low}-\frac{67}{51}\alpha_{\rm up}
 =\frac{1072561}{19125000}=0.056081620915\ldots.
 \label{supp:eq:mp-witness-noise}
\end{equation}
In particular, \(\alpha_{\rm wc}\le0.01826134\), \(D\ge0.080072\), and
\(\delta_g\le0.01821193\). These bounds apply to the stated local Markov
Pauli model with preparation, idle, and other errors set to zero. The
gate-level bounds concern this $d=4$ circuit. If the complete ideal/noisy
instruments differ by half-diamond distance at most \(\mu\), their accepted
subnormalized states differ by trace norm at most \(2\mu\), and acceptance
probabilities differ by at most \(\mu\). For actual acceptance \(A\ge A_0-\mu\),
\begin{equation}
 \frac12\left\|\frac{\sigma_a}{A}-\frac{\tau_a}{A_0}\right\|_1
 \le\frac{3\mu}{2A}\le\frac{2\mu}{A_0-\mu}.
 \label{supp:eq:normalization-transfer}
\end{equation}
Adding the ideal error gives \(\delta_g\le\delta_0+2\mu/(A_0-\mu)\), uniformly over \(P,R\) and normal inputs. This bounds conditional normalized states, not the diamond norm of an input-dependent normalization.

\subsection{Noise channels, tests, and failure regions}
On a local Hilbert space of dimension \(q\), the simulated Pauli channel is
\begin{equation}
 \mathcal N_p=(1-p)\mathrm{id}+
 \frac{p}{q^2-1}\sum_{P\ne I}P(\cdot)P^\dagger.
\end{equation}
The nonidentity Pauli outputs on a local Bell state are orthogonal to the identity output. This gives half-diamond distance \(p\), with the matching upper bound from convexity. Its fully depolarizing mixing weight is \(pq^2/(q^2-1)\). Coherent errors are \(e^{-i\theta XX/2}\), \(e^{-i\theta Y/2}\), and \(e^{-i\theta Z/2}\) after CX, \(R_y\), and \(R_z\), respectively. A final classical flag flip has probability \(q_{\rm ro}\). Each axis is varied separately. Preparation and idle errors are zero in this model.

Sixteen calibration encodings and 64 independently generated evaluation encodings are drawn by complex-Gaussian QR, with seeds 94001--94016 and 95001--95064. The actual gate channels give both logical Choi branches, including the remote reference. Work and reference outputs are traced, including noisy nonzero work outcomes.

Let \(J_a\) be the input-first unnormalized Choi matrix and \(H=(\operatorname{Tr}_{\rm out}J_a)^T\). Define \(\bar A=\operatorname{Tr}H/2\), \(A_{\min}=\lambda_{\min}(H)\), and \(\Delta J=J_a-\bar A J_{\rm id}\). A feasible diamond-norm dual gives
\begin{equation}
 u=\tfrac12\|\operatorname{Tr}_{\rm out}|\Delta J|\|_\infty,\qquad
 \delta_g\le\frac{u+\tfrac12\|H-\bar A I\|_\infty}{A_{\min}}.
\end{equation}
The numerical calculation pads \(u\) by \(2\times10^{-12}\). Bell-input half-trace distance is \(\|\Delta J\|_1/(4\bar A)\), a lower bound only. A numerical pass requires the displayed upper bound at most 0.02, \(1-A_{\min}\le0.02\), and \(D\ge0.08\); a Bell lower bound above 0.02 establishes failure. Cases between the two bounds are classified as unresolved.

\begin{center}\small
\begin{tabular}{lrrr}
\toprule
condition & pass & fail & unresolved\\
\midrule
ideal &64&0&0\\
\(p_2=10^{-5}\) &64&0&0\\
\(p_2=3\,10^{-5}\) &64&0&0\\
\(p_2=10^{-4}\) &0&64&0\\
\(p_2=3\,10^{-4}\) &0&64&0\\
\(\theta=-0.001,+0.001\) (each)&64&0&0\\
\(\theta=-0.003\)&16&37&11\\
\(\theta=+0.003\)&15&37&12\\
\(q_{\rm ro}=0.001,0.003\) (each)&64&0&0\\
\bottomrule
\end{tabular}
\end{center}
Telescoping proves the all-$P$ bounds at $p_2=10^{-5}$ and separately at $q_{\rm ro}=0.001$. The remaining table entries classify the finite evaluation ensemble under the stated noise model. The application pass criterion jointly checks normal rejection, conditional disturbance, and detection.

\subsection{Parameter comparison across retained scenarios}
Here $\epsilon$ bounds both normal rejection and conditional disturbance, and $D_0$ is the required detection probability. The source and target columns define distinct tasks.
\begin{center}\small
\begin{tabular}{lcccccl}
\toprule
scenario & $d$ & $w$ & $\epsilon$ & $D_0$ & quantum budget & MP necessary budget\\
\midrule
ideal source & $\ge3$ & $1$ & $0.01$ & $0.50$ & $M_{\rm Q}^*\le7$ & $M_{\rm MP}^*\ge98(d-2)$\\
contaminated source & $8$ & $0.95$ & $0.01$ & $0.08$ & $M_{\rm Q}^*\le2$ & $M_{\rm MP}^*\ge84$\\
exact finite certificate & $4$ & $0.90$ & $0.02$ & $0.10$ & $M_{\rm Q}^*=3$ & $M_{\rm MP}^*\ge20$\\
local gate-noise model & $4$ & $0.90$ & $0.02$ & $0.08$ & $M=2$ & ---\\
\bottomrule
\end{tabular}
\end{center}
The first two quantum budgets are feasible constructions; the third is an exact minimum for its task. The MP entries are necessary lower bounds. The final row specifies the circuit model above with $p_2=10^{-5}$, $p_1=10^{-6}$; its proven detection bound is $0.080072$ and differs from the certificate's $0.10$ target.

\section{Relation to primary literature}\label{supp:sm:literature}
Quantum references can serve as a program for an operation on an independent data register, as in programmable quantum processors~\cite{NielsenChuang1997}. Learning quantum measurements~\cite{BisioDAranoPerinottiSedlak2011} and rotations about an unknown direction~\cite{MoChiribella2019} provides related settings in which retaining quantum information benefits a later operation. Programmable discrimination and universal learning machines use finite quantum training sets to classify a new system~\cite{DusekBuzek2002,BergouHillery2005,Sentis2010,FanizzaMariGiovannetti2019}. The role of memory depends on the objective: for minimum-error classification of two unknown qubit states, an optimized training-set measurement and a record-controlled data measurement can attain the coherent programmable optimum~\cite{Sentis2012}. Here the objective combines a one-sided leakage decision with exact preservation of every normal message and its remote entanglement.

The row-cutoff predicate used in Sec.~\ref{supp:sm:ideal-arbitrary-m} is the one-sided rank-testing predicate of O'Donnell and Wright~\cite{ODonnellWright2017} (Sec.~6.1, Proposition~6.1 in the \href{https://www.cs.cmu.edu/~jswright/papers/quantum-spectrum-testing.pdf}{author-posted preprint}).  Their support argument supplies the normal-support step, including all multiplicities, while the present calculation evaluates the mixed-reference single-message instrument and its finite-$M$ law.  Universal comparison of unknown pure states and many unknown systems supplies earlier antisymmetric-comparison antecedents~\cite{BarnettCheflesJex2003,JexAnderssonChefles2003}.  The task here compares a normal query with an independently leaking query under a common unknown subspace, rather than comparing copies of one state.

Quantum-memory separations for state learning supply broader sample-complexity context. In the without-memory state-learning model of Chen, Cotler, Huang, and Li~\cite{ChenCotlerHuangLi2021}, each access returns the classical outcome of an arbitrary POVM on one sample, chosen adaptively from earlier outcomes (Definitions~4.15--4.18 of arXiv:2111.05881v2). Bubeck, Chen, and Li~\cite{BubeckChenLi2020} consider unentangled, single-copy measurement schedules (Definition~3.2 of arXiv:2004.07869). Our MP class allows an arbitrary collective POVM on all references before the message operation. These earlier restrictions therefore do not establish its converse; the posterior bound here does not depend on separate reference measurements.

Li et al.~\cite{CoherentInference2026} already allow independent quantum data--reference inputs in density-matrix exponentiation and a comparator with arbitrary collective reference POVMs followed by conditional data channels (Sec.~S5 and Lemma~S5.1 of arXiv:2605.21457v1). Independent quantum output and this access order are thus shared features. Their DME error measures diamond-norm distance from the target unitary channel. Here zero false alarm and zero normal disturbance constrain the normal inputs, while anomalous messages can still be missed. These are different error objectives. The present result gives the full zero-false-alarm performance for the nonidentically distributed reference--message family, protected normal quantum output, and the task-specific exact MP posterior slope and matching sample order.

Known-code subspace verification assumes a classically specified target projector or stabilizer structure~\cite{ChenSubspace2025}. Spectral anomaly detection studies spectral scores under Hamiltonian, block-encoding, or sample-based DME access~\cite{SpectralAnomaly2026}. These access models permit coherent use of the spectral information and do not require all references to be converted to a classical record before the message operation. The MP posterior bound applies to that reference-first classicalization restriction; its decision constraint is independent of the quality of the returned message.

Yang, Chiribella, and Ebler~\cite{YangChiribellaEbler2016} give zero-error compression of identically prepared mixed-state ensembles in general dimension and rank (Appendix~F, Theorem~4 of arXiv:1506.03542). The Schur decomposition separates a maximally mixed multiplicity factor, which can be discarded and restored by the decoder. Measuring a reference-only Schur label preserves the reference ensemble, which is already block diagonal in that label. The one-versus-$2/3$ normal Bell-fidelity example in Sec.~\ref{supp:sm:finite-certificates-hardware} concerns measurement of the full Young label of the references and message jointly. The remaining memory question is the minimum quantum information required for the later support check, allowing a classical record and a specified normal-output tolerance, rather than faithful reconstruction of the entire reference ensemble.

\end{document}